\gdef\@copyrightpermission{
 \begin{minipage}{0.3\columnwidth}
  \href{https://creativecommons.org/licenses/by/4.0/}{\includegraphics[width=0.90\textwidth]{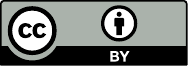}}
 \end{minipage}\hfill
 \begin{minipage}{0.7\columnwidth}
  \href{https://creativecommons.org/licenses/by/4.0/}{This work is licensed under a Creative Commons Attribution International 4.0 License.}
 \end{minipage}
 \vspace{5pt}
}
\def\@ACM@checkaffil{% Only warnings
    \if@ACM@instpresent\else
    \ClassWarningNoLine{\@classname}{No institution present for an affiliation}%
    \fi
    \if@ACM@citypresent\else
    \ClassWarningNoLine{\@classname}{No city present for an affiliation}%
    \fi
    \if@ACM@countrypresent\else
        \ClassWarningNoLine{\@classname}{No country present for an affiliation}%
    \fi
}
\setlist[itemize]{topsep=2pt}
\g@addto@macro\normalsize{%
  \abovedisplayskip 3pt plus 2pt minus 3pt%
  \belowdisplayskip \abovedisplayskip
  \abovedisplayshortskip 3pt plus2pt  minus3pt%
  \belowdisplayshortskip 3pt plus2pt minus3pt%
}
  \providecommand\BibTeX{{%
    \normalfont B\kern-0.5em{\scshape i\kern-0.25em b}\kern-0.8em\TeX}}}
\begin{document}

\title{LoRec: Combating Poisons with Large Language Model for Robust Sequential Recommendation}

\author{Kaike Zhang}
\affiliation{%
  \institution{CAS Key Laboratory of AI Safety, Institute of Computing Technology, Chinese Academy of Sciences}
  \country{ }
}
\affiliation{%
  \institution{University of Chinese Academy}
  \country{of Sciences, Beijing, China}
}
\email{zhangkaike21s@ict.ac.cn}

\author{Qi Cao}
\affiliation{%
  \institution{CAS Key Laboratory of AI Safety, Institute of Computing Technology, Chinese Academy of Sciences,}
  \country{Beijing, China}
}
\email{caoqi@ict.ac.cn}
\authornote{Corresponding author}

\author{Yunfan Wu}
\affiliation{%
  \institution{CAS Key Laboratory of AI Safety, Institute of Computing Technology, Chinese Academy of Sciences}
  \country{ }
}
\affiliation{%
  \institution{University of Chinese Academy}
  \country{of Sciences, Beijing, China}
}
\email{wuyunfan19b@ict.ac.cn}

\author{Fei Sun}
\affiliation{%
  \institution{CAS Key Laboratory of AI Safety, Institute of Computing Technology, Chinese Academy of Sciences,}
  \country{Beijing, China}
}
\email{sunfei@ict.ac.cn}

\author{Huawei Shen}
\affiliation{%
  \institution{CAS Key Laboratory of AI Safety, Institute of Computing Technology, Chinese Academy of Sciences,}
  \country{Beijing, China}
}
\email{shenhuawei@ict.ac.cn}

\author{Xueqi Cheng}
\affiliation{%
  \institution{CAS Key Laboratory of AI Safety, Institute of Computing Technology, Chinese Academy of Sciences,}
  \country{Beijing, China}
}
\email{cxq@ict.ac.cn}

\renewcommand{\shortauthors}{Kaike Zhang et al.}

\begin{abstract}

Sequential recommender systems stand out for their ability to capture users' dynamic interests and the patterns of item transitions.
However, the inherent openness of sequential recommender systems renders them vulnerable to poisoning attacks, where fraudsters are injected into the training data to manipulate learned patterns. Traditional defense methods predominantly depend on predefined assumptions or rules extracted from specific known attacks, limiting their generalizability to unknown attacks.
To solve the above problems, considering the rich open-world knowledge encapsulated in Large Language Models (LLMs), we attempt to introduce LLMs into defense methods to broaden the knowledge beyond limited known attacks. We propose \textbf{LoRec}, an innovative framework that employs \textbf{L}LM-Enhanced Calibration to strengthen the r\textbf{\Large o}bustness of sequential \textbf{Rec}ommender systems against poisoning attacks. LoRec integrates an LLM-enhanced CalibraTor (LCT) that refines the training process of sequential recommender systems with knowledge derived from LLMs, applying a user-wise reweighting to diminish the impact of attacks. Incorporating LLMs' open-world knowledge, the LCT effectively converts the limited, specific priors or rules into a more general pattern of fraudsters, offering improved defenses against poisons. Our comprehensive experiments validate that LoRec, as a general framework, significantly strengthens the robustness of sequential recommender systems.

% Sequential recommender systems stand out for their ability to capture users' dynamic interests and the patterns of item transitions.
% However, the inherent openness of these systems renders them vulnerable to poisoning attacks, where fraudsters are injected into the training data to manipulate learned patterns. Traditional defense methods predominantly depend on predefined assumptions or rules extracted from specific known attacks, limiting their generalizability to unknown attacks.
% To solve this problem, considering the rich open-world knowledge in Large Language Models (LLMs), we attempt to introduce LLMs into defense methods to broaden the knowledge beyond limited known attacks. We propose \textbf{LoRec}, an innovative framework that employs \textbf{L}LM-Enhanced Calibration to strengthen the r\textbf{\Large o}bustness of sequential \textbf{Rec}ommender systems against poisoning attacks. LoRec integrates an LLM-enhanced CalibraTor (LCT) that refines the training process of sequential recommender systems with knowledge derived from LLMs, applying a user-wise reweighting to diminish the impact of attacks. Incorporating LLMs' knowledge, the LCT effectively converts the limited, specific priors or rules into a more general pattern of fraudsters, offering improved defenses against poisons. Our comprehensive experiments validate that LoRec, as a general framework, significantly strengthens the robustness of sequential recommender systems. 
\end{abstract}

\begin{CCSXML}
<ccs2012>
   <concept>
       <concept_id>10002951.10003317.10003347.10003350</concept_id>
       <concept_desc>Information systems~Recommender systems</concept_desc>
       <concept_significance>500</concept_significance>
       </concept>
   <concept>
       <concept_id>10002978.10003022.10003027</concept_id>
       <concept_desc>Security and privacy~Social network security and privacy</concept_desc>
       <concept_significance>500</concept_significance>
       </concept>
 </ccs2012>
\end{CCSXML}

\ccsdesc[500]{Information systems~Recommender systems}
\ccsdesc[500]{Security and privacy~Social network security and privacy}

\keywords{Robust Sequential Recommendation, Large Language Model, Poisoning Attack}

\maketitle

\section{INTRODUCTION}
\begin{figure}
    \centering
    \subfigure[Defense Methods on Sequential Recommender Systems against Attacks]{
    \includegraphics[width=0.475\textwidth]{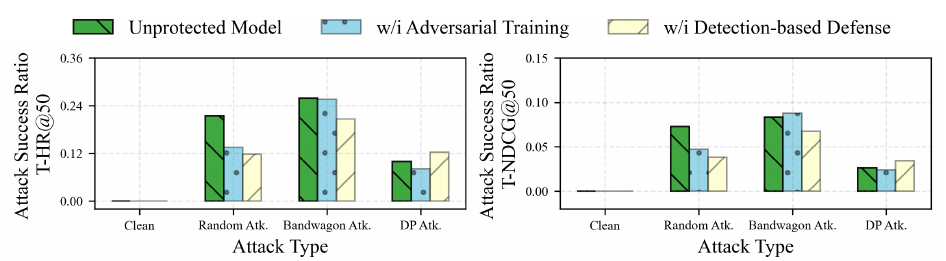}
    \label{fig:intro_attack}
    }
    \subfigure[LLM Knowledge for General Patterns of Fraudsters in Defense]{
    \includegraphics[width=0.475\textwidth]{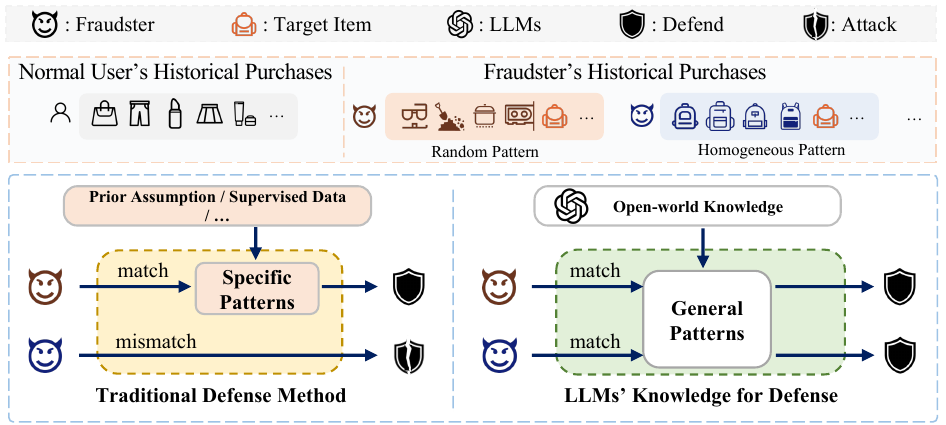}
    \label{fig:llm_know}
    }
    \caption{(a) Sequential recommender systems show vulnerability to attacks despite various defense strategies. (b) Utilizing LLMs' open-world knowledge to enhance defense by generalizing from specific to general patterns.}
\end{figure}

Sequential recommender systems are increasingly popular for personalized recommendations by capturing the dynamic interests of users and the evolving transition patterns of items~\cite{kang2018selfattentive, sun2019bert4rec, zhou2022filterenhanced}. However, the inherent openness of these recommender systems allows attackers to effortlessly inject fraudsters to manipulate the learned item transition patterns, thus fulfilling objectives such as promoting target items, also known as poisoning attacks~\cite{huang2021data, tang2020revisiting}. Such manipulations can drastically skew the distribution of target item exposure, damaging user experience and hindering the long-term development of recommender systems~\cite{zhang2023robust}.

Defensive methods against poisoning attacks within recommender systems typically fall into two groups~\cite{zhang2023robust}, i.e., constructing robust models through (1) adversarial training~\cite{he2018adversarial, li2020adversarial, wu2021fight, yue2022defending,ye2023towards}, or (2) detecting and eliminating injected fraudsters~\cite{chung2013beta, zhang2014hhtsvm, yang2016rescale, zhang2020gcnbased, liu2020recommending}. 
Adversarial training typically follows a ``min-max'' paradigm, which finds attacks that maximally damage performance and then trains the optimal model to minimize the influence of such attacks. However, this ``min-max'' paradigm remains a significant gap with real-world attacks, such as target item promotion~\cite{tang2020revisiting}, failing to show satisfiable defensibility. As shown in Figure~\ref{fig:intro_attack}, adversarial training is less effective against Bandwagon~\cite{mobasher2007toward} and DP attacks~\cite{huang2021data}.

Unlike adversarial training, detection-based methods scan the entire dataset to identify potential fraudsters according to human priors or parts of known fraudster patterns.
Subsequent steps include either the removal of these identified users (hard detection)~\cite{zhang2014hhtsvm, yang2016rescale} or diminishing their impact during training by weight adjustment (soft detection)~\cite{zhang2020gcnbased}.
Unfortunately, the performance of these methods is limited by their specific knowledge from heuristic priors or supervised fraudsters~\cite{chung2013beta, zhang2020gcnbased, yang2016rescale}. As shown in Figure~\ref{fig:intro_attack}, detection-based methods utilizing Bandwagon attacks as supervised data exhibit enhanced defense against such attacks, but fail to defend against DP attacks which deviate from the known supervised fraudsters.
In practical scenarios, as attacks evolve, the ability of defense methods to adapt to unknown attacks is crucial. However, these methods can only mitigate impacts aligning with their specific knowledge and lack broader generalization to unknown attacks.

Recently, Large Language Models (LLMs) have achieved significant advancements across a multitude of fields, showing extraordinary capabilities in diverse applications~\cite{jiao2023chatgpt, li2023empowering}.
Researchers further demonstrate LLMs' ability to encapsulate expansive open-world knowledge, which can aid various tasks in achieving improved generalizability~\cite{fan2023recommender, hou2023learning}. To address the aforementioned challenges, particularly the limited generalizability of existing defense methods against unknown attacks (as illustrated in the left part of Figure~\ref{fig:llm_know}), our work investigates leveraging the open-world knowledge of LLMs to enhance defense methods. By utilizing LLMs, we aim to move beyond the specific attack patterns learned from predetermined assumptions or supervised data towards a broader understanding of fraudsters for developing robust sequential recommender systems, as shown in the right part of Figure~\ref{fig:llm_know}.
 
To achieve the above goal, we present \textbf{LoRec}, an innovative framework that leverages an \textbf{L}LM-Enhanced CalibraTor (LCT) to bolster the r\textbf{\Large o}bustness of sequential \textbf{Rec}ommender systems against poisoning attacks. The LCT utilizes LLM-derived knowledge to aid in user weight calibration during the training phase of the recommender, thus mitigating the influence of fraudsters. Specifically, the LCT utilizes two types of knowledge: (1) specific knowledge that underscores the detailed differences between known fraudsters and genuine users within the recommender system, and (2) LLMs' general knowledge regarding the fraudulence potential of given user profiles. By combining current model feedback (as specific knowledge) with LLMs' general knowledge, the LCT gains both a zoomed-in view and a broad perspective for recognizing fraudsters in the current recommender system. As both the LCT and the sequential recommender system undergo ongoing optimization, the LCT continuously calibrates the training loss of the model via user-wise reweighing. This ensures a gradual reduction in the impact of fraudsters, preserving accurate and robust recommendations.

The pivotal contributions of our work are as follows:
\begin{itemize}[leftmargin=*]
    \item We pioneer the exploration of LLMs' knowledge of fraudsters within recommender systems, revealing how LLMs' knowledge can aid defense methods in generalizing across various attacks.
    \item We lead the initiative of incorporating LLMs into the robustness of sequential recommender systems, introducing LoRec as an innovative and general framework that employs LLM-enhanced Calibration for robust sequential recommendations.
    \item Our extensive experiments confirm the efficacy of the LoRec framework in withstanding diverse types of attacks and its adaptability across multiple backbone recommendation architectures.
\end{itemize}

\section{RELATED WORK}
\begin{figure*}
    \centering 
    \includegraphics[width=6.4in]{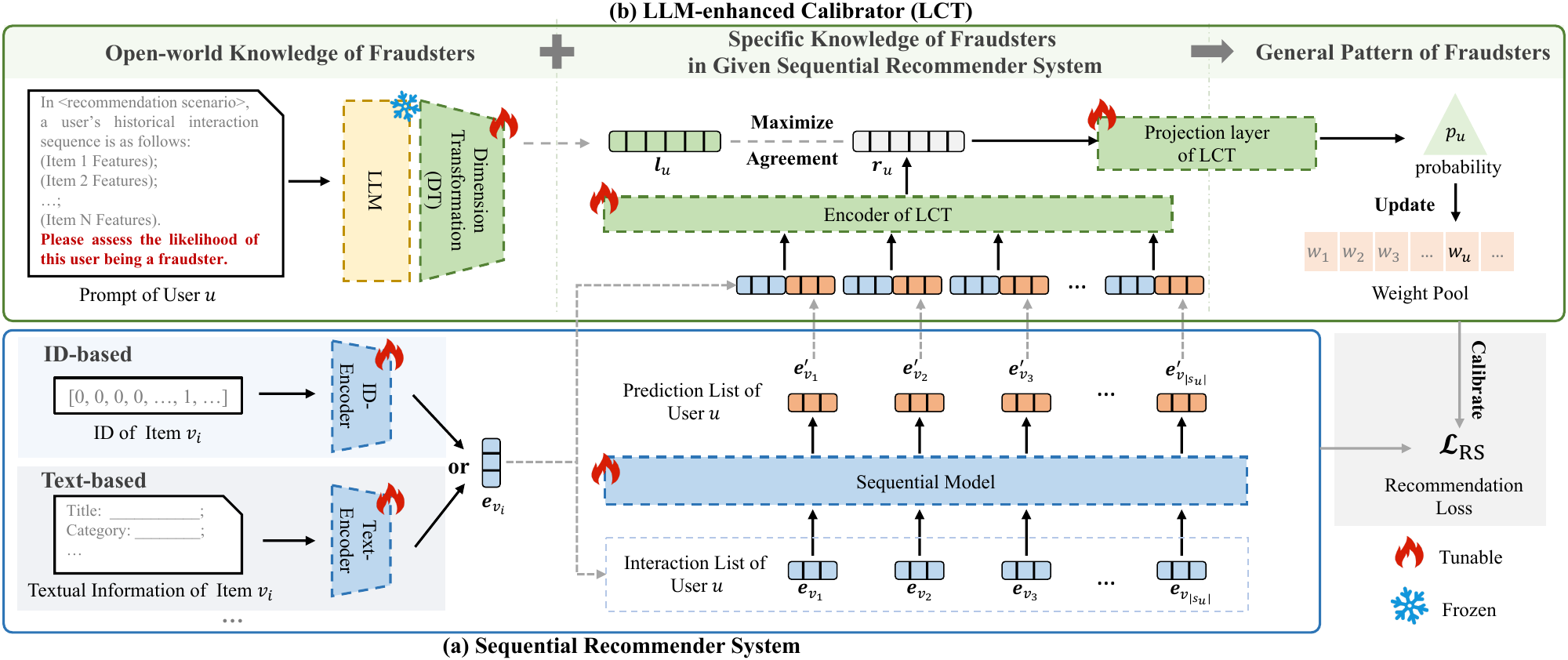}
    \caption{Overview of LoRec: (a) Sequential Recommender System with ID-based/Text-based item encoding; (b) LLM-enhanced Calibrator utilizes both open-world knowledge from LLMs and specific knowledge within the sequential recommender system to calibrate user weights, enhancing its robustness against poisoning attacks.}
    \label{fig:framework}
\end{figure*}

\subsection{Sequential Recommender System}
Early sequential recommender systems primarily rely on the Markov Chain framework to model user interactions~\cite{he2016fusing, he2016vista}. As neural network technologies advance, a shift occurs toward architectures like Recurrent Neural Networks~\cite{hidasi2015session, beutel2018latent} and Convolutional Neural Networks~\cite{tang2018personalized}. The introduction of sophisticated neural network architectures, particularly the Transformer~\cite{vaswani2017attention}, further enhances the capability of effectively capturing the nuances of dynamic user preferences, such as SASrec~\cite{kang2018selfattentive} and Bert4Rec~\cite{sun2019bert4rec}. More recently, researchers are exploring novel paradigms for modeling user interests, such as MLPs~\cite{zhou2022filterenhanced, zhang2023mlpst}, pushing the boundaries of the field.  Additionally, significant efforts are made to integrate self-supervised learning into the training of sequential recommender systems, aiming to enhance recommendation performance~\cite{zhou2020srec, xie2022contrastive, chen2022intent}. Despite these advancements, the susceptibility of these models to malicious attacks remains a concern, highlighting a significant challenge~\cite{zhang2023robust}.

\subsection{Robust Recommender System}
Mainstream methods for enhancing the robustness of recommender systems against poisoning attacks broadly fall into two categories~\cite{zhang2023robust}, the development of robust systems via (1) adversarial training~\cite{he2018adversarial, li2020adversarial, wu2021fight, yue2022defending, ye2023towards}, or (2) the removal of malicious data via detection techniques~\cite{chung2013beta, zhang2014hhtsvm, yang2016rescale, zhang2020gcnbased, liu2020recommending}.

Adversarial training, under the framework of Adversarial Personalized Ranking (APR)~\cite{he2018adversarial}, incorporates small adversarial perturbations at the parameter level during training~\cite{he2018adversarial, li2020adversarial, ye2023towards}. This process employs a ``min-max'' optimization strategy, aiming to minimize recommendation loss while maximizing the impact of adversarial perturbations. This approach is predicated on the assumption that attackers aim to degrade overall recommendation performance~\cite{he2018adversarial, wu2021fight}. 
However, real-world attacks often have varied objectives, like the promotion of specific items~\cite{huang2021data, tang2020revisiting}. Such attacks usually do not directly undermine recommendation performance, leading to a significant discrepancy with the ``min-max'' paradigm. Consequently, this diminishes the defensibility of adversarial training.

Detection techniques directly identify and remove potential fraudsters before training~\cite{chung2013beta, zhang2014hhtsvm, yang2016rescale, liu2020recommending}, or detect suspicious behaviors during training and softly discount their influence~\cite{zhang2020gcnbased}. These detection mechanisms often incorporate specific assumptions~\cite{chung2013beta, zhang2020gcnbased}, or rely on supervised attack data~\cite{zhang2014hhtsvm, yang2016rescale, zhang2020gcnbased}.
For example, GraphRfi~\cite{zhang2020gcnbased} assumes the unpredictability of fraudsters' behaviors, filtering them out by contrasting predictions with actual user behaviors.
Additionally, RAdabst~\cite{yang2016rescale} creates fraudulent data based on heuristic-based attacks~\cite{lam2004shilling, mobasher2007toward} to train its detection model. 
These assumptions or reliance can impede the ability of detection methods to adapt to changing attack patterns. 
For example, DP attacks~\cite{huang2021data}, which are optimization-based, employ poisoned model predictions to create fraudulent behaviors. This challenges GraphRfi's assumption of unpredictability~\cite{zhang2020gcnbased} and diverges from the supervised data used in RAdabst~\cite{yang2016rescale}.

In real-world scenarios, as attacks evolve, the ability of defense methods to generalize to unknown attacks is vital. However, there is still a shortage of effective defense strategies that can generalize across various types of unknown attacks.

\section{PRELIMINARY}
This section mathematically formulates the task of sequential recommendation.
We define the set of users as $\mathcal{U}$ and the set of items as $\mathcal{V}$. Each user $u \in \mathcal{U}$ is linked with a historical interaction sequence $s_u = [v_{1}, v_{2}, \dots, v_{|s_u|}]$, wherein $v_t \in \mathcal{V}$ denotes an item interacted with user $u$, and $|s_u|$ denotes sequence length. The objective of sequential recommendation is to predict the next item that user $u$ will interact with. This predictive task can be formulated as:
\begin{equation}
    v_{|s_u|+1} = \arg\max_{v \in \mathcal{V}} \mathbb{P}(v|s_u),
\end{equation}
where $\mathbb{P}(v|s_u)$ is the conditional probability of item $v$ being the next item following the historical interaction sequence $s_u$.

\section{Method}
\label{sec:method}

\subsection{Overview of LoRec}

Due to the openness of sequential recommender systems, attackers can inject fraudulent users into their training data. This manipulation disrupts the item transition patterns learned by these recommender systems, maliciously promoting target items. To effectively counter these malicious activities, our primary goal is to selectively reduce the influence of users identified as potential fraudsters, ensuring that the recommender systems remain robust and unaffected by such manipulations. 

A key component of this strategy is accurately differentiating between genuine users and fraudsters, especially in cases of previously unseen attacks.
Consequently, we explore integrating general knowledge from LLMs about the fraudulent potential of user historical interactions into specific knowledge highlighting the detailed differences between genuine users and known fraudsters in the recommender system, enhancing generalizability to unknown attacks.

Building on this foundation, we introduce LoRec, a general and adaptable framework suitable for a variety of sequential recommender systems. LoRec comprises two main components: a backbone sequential recommendation model and an LLM-enhanced CalibraTor (LCT), as illustrated in Figure~\ref{fig:framework}.

\textbf{Sequential Recommender System}. This component can be any existing sequential recommender system. It transforms users' historical interaction sequences into low-dimensional representations. These representations are utilized for predicting subsequent items with which users will interact, as depicted in Figure~\ref{fig:framework}(a).

\textbf{LLM-enhanced CalibraTor (LCT)}. 
The LCT, a novel component of LoRec, calibrates user weights by incorporating both specific knowledge within the current recommendation model and the open-world knowledge of LLMs. By combining these two types of knowledge, the LCT learns the general patterns of fraudsters. The LCT then employs these patterns to calibrate the weights assigned to each user during the training of the sequential recommender system, significantly enhancing the robustness against poisoning attacks, as shown in Figure~\ref{fig:framework}(b).

\subsection{Sequential Recommender System}

As recommender systems evolve, their models increasingly incorporate not only user interaction behaviors but also diverse item-related side information, including texts~\cite{hou2023learning,yuan2023where}, images~\cite{niu2018neural}, and other relevant data~\cite{deldjoo2020recommender}. 
Typically, for each item in the set $\mathcal{V}$, we utilize modality-specific encoders, e.g., ID-based encoders or Text-based encoders, to transform this side information into embeddings $\bm{e}_v \in \mathbb{R}^{d}$. This process is illustrated in the left part of Figure~\ref{fig:framework}(a).
The historical interaction sequence $s_u = [v_1, v_2, \dots, v_{|s_u|}]$ of user $u$ can be represented at embedding-level as:
\begin{equation}
    \bm{E}_u = [ \bm{e}_{v_1}, \bm{e}_{v_2}, \dots, \bm{e}_{v_{|s_u|}} ].
\end{equation}

The encoded sequence $\bm{E}_u$ serves as the input for the sequential model $g$ within any existing sequential recommender systems, which then generates the prediction embedding sequence $\bm{E}_u' = [ \bm{e}_{v_1}', \bm{e}_{v_2}', \dots, \bm{e}_{v_{|s_u|}}' ]$ for user $u$, where each $\bm{e}_{v_i}' \in \mathbb{R}^{d}$ is defined as: 
\begin{equation}
    \bm{e}_{v_i}' = g(\bm{E}_{u, 1:i}), \quad i = 1, 2, \dots, |s_u|.
\end{equation}

To forecast the likelihood of subsequent items in the sequence, we leverage both the item and prediction embeddings. The probability of item $v_j$ being the next in the sequence, given the user's historical interactions up to time $i$, is calculated as follows:
\begin{equation}
    \mathbb{P}\left(v_{i+1} = v_j|s_{u, 1:i} \right) = \sigma(\bm{e}_{v_j}^\top \bm{e}_{v_i}'),
    % \left[1:t\right]
\end{equation}
where $\sigma(\cdot)$ is the sigmoid function.

\subsection{LLM-enhanced Calibrator}
To effectively and accurately identify fraudsters in $\mathcal{U}$, the LCT predominantly leverages two knowledge types: specific knowledge from the current sequential recommender system regarding known fraudsters, and open-world knowledge derived from LLMs about whether a user appears fraudulent.
The specific knowledge underscores the detailed differences between genuine users and known fraudsters within the current sequential recommender system. Meanwhile, the open-world knowledge provides the LCT with a general understanding, enhancing its ability to generalize across different fraudster types. These comprehensive insights aid LCT in better identifying and diminishing the impact of users who are potentially harmful. By utilizing these two types of knowledge, the LCT gains both a zoomed-in view and a wide lens for recognizing the fraudsters in the current sequential recommender system.

\subsubsection{Specific Knowledge Modeling}
Specific knowledge refers to the distinctions between genuine users and known fraudsters within the current sequential recommender system. This knowledge offers detailed insights crucial for identifying fraudsters.
Considering the behavioral differences between genuine users and fraudsters, we leverage feedback from the current sequential recommender system to model the specific knowledge. This involves concatenating each item's embedding $\bm{e}_v$ (the feedback from the modality-specific encoder) with its predictive counterpart $\bm{e}_v'$ (the feedback from the sequential model), resulting in a composite embedding $\bm{k}_v \in \mathbb{R}^{2d}$ for each item in the historical interaction sequence as:
\begin{equation}
    \bm{k}_{v_i} = \mathrm{concat}(\bm{e}_{v_i}, \bm{e}_{v_i}'), \quad v_i \in s_u.
\end{equation}

Then, a learnable embedding $\bm{k}_0 \in \mathbb{R}^{2d}$ is used to summarize the features in $\bm{K}_u = [\bm{k}_{v_1}, \bm{k}_{v_2}, \dots, \bm{k}_{v_{|s_u|}}]$ through self-attention, producing $\bm{r}_u \in \mathbb{R}^{2d}$ as:
\begin{equation}
    \bm{r}_u = \mathrm{selfAtt}\left(\left[\bm{k}_0, \bm{K}_u\right]\right),
\end{equation}
where $\mathrm{selfAtt}(\cdot)$ is the self-attention mechanism with position embeddings as employed in \cite{vaswani2017attention}, and $\bm{r}_u$ is the output corresponding to $\bm{k}_0$.
Then, a two-layer perceptron, i.e., the projection layer $h$, is employed to map $\bm{r}_u$ onto $p_u \in [0, 1]$, which reflects the likelihood of ``user $u$ being a fraudster'':
\begin{equation}
    p_u = h(\bm{r}_u).
\end{equation}

\subsubsection{Open-world Knowledge Extraction}
LLMs incorporate vast open-world knowledge into their parameters, which can provide a more general understanding of identifying fraudsters. To leverage this knowledge, we transform user interaction data into prompts to effectively query the LLMs' knowledge about ``whether the given user is a fraudster''.
In more detail, we structure the recommendation context and the user's interaction sequence as:
\begin{adjustwidth}{4em}{4em}
\begin{quote}
    \textit{``In <\textbf{recommendation scenario}>, \\ a user's interaction sequence is as follows}: \\
    \textit{\textbf{Item 1 Features}; \\
    \textbf{Item 2 Features}; \\
    ...; \\
    \textbf{Item N Features}.}''
\end{quote}
\end{adjustwidth}
where ``<\textbf{recommendation scenario}>'' specifies the recommendation context, for example, news recommendation.
The term ``\textit{\textbf{Features}}'' is the item's side information, such as title and category.
We then pose an instruction to the LLMs:
\begin{adjustwidth}{1em}{1em}
\begin{quote}
    ``\textit{Please assess the likelihood of this user being a fraudster}.'' 
\end{quote}
\end{adjustwidth}
Let $\mathrm{prompt}_u$ denote the combination of the structured recommendation context, user's interaction sequence, and the instruction message of user $u$, as depicted in the left part of Figure~\ref{fig:framework}(b).

We transform the output of the LLM into a low-dimensional embedding as follows:
\begin{equation}
    \bm{l}_u = \mathrm{DT}(\mathrm{LLM}(\mathrm{prompt}_u)),
\end{equation}
where $\mathrm{DT}$ denotes a Dimension Transformation (DT) block, used for transforming the dimension of the embedding.
The obtained embedding $\bm{l}_u \in \mathbb{R}^{2d}$ contains the LLMs' open-world knowledge of whether user $u$ exhibits characteristics of fraudsters.

Next, we aim to maximize the agreement between the representations $\bm{l}_u$ and $\bm{r}_u$ to integrate the open-world knowledge:
\begin{equation}
    \max{\mathrm{sim}(\bm{l}_u, \bm{r}_u)}, \quad \forall u \in \mathcal{U},
    \label{eq:max_ag}
\end{equation}
where $\mathrm{sim}(\cdot)$ is the cosine similarity function. We can incorporate this agreement maximization into the LCT through loss optimization (see details in Section~\ref{sec:lct_train}). 

Such agreement maximization can further help the representation $\bm{r}_u$ considering fraudster characteristics from both specific knowledge within the current sequential recommender system and the LLMs’ open-world knowledge, thereby better reflecting the likelihood $p_u$ of the user $u$ being a fraudster.

\subsubsection{Training Strategy of LCT}
\label{sec:lct_train}
To train the LCT, we establish a set of known fraudulent users as the supervised dataset, $\mathcal{U}_{\mathrm{atk}}$, where $\mathcal{U}_{\mathrm{atk}} \bigcap \mathcal{U} =  \emptyset$. Users in $\mathcal{U}_{\mathrm{atk}}$ are assigned a label of 1.
Ideally, a set of verified genuine users would be assigned a label of 0. However, practically, we have access only to the observed user set $\mathcal{U}$, which might include some potential fraudsters.
As a compromise solution, we label users in $\mathcal{U}$ as 0 to train a model approximating the fraudster's likelihood as:
\begin{equation}
    \mathcal{L}_{\mathrm{F}} = - \frac{1}{|\mathcal{U}_\mathrm{atk}|} \sum_{u \in \mathcal{U}_\mathrm{atk}} \log(p_u) - \frac{1}{|\mathcal{U}|} \sum_{u \in \mathcal{U}} \log(1 - p_u).
    \label{eq:l-fd}
\end{equation}

\textbf{Regularization Loss}. Given that the user set $\mathcal{U}$ may include both genuine users and potential fraudsters~(injected by real attackers but we do not know), we implement an Entropy Regularization for users in $\mathcal{U}$ to avoid extreme predictions by the LCT:
\begin{equation}
    \mathcal{L}_{\mathrm{ER}} = \frac{1}{|\mathcal{U}|} \sum_{u \in \mathcal{U}} \left( \log(p_u) + \log(1 - p_u) \right).
\end{equation}

\textbf{LLM-enhanced Loss}. Lastly, we aim to integrate the open-world knowledge into $\bm{r}_u$ to extend the specific knowledge of known fraudsters into the general pattern within the current sequential recommender system. This is achieved by maximizing the agreement between $\bm{l}_u$ from the LLMs and $\bm{r}_u$ from the LCT encoder as per Equation~\ref{eq:max_ag}, utilizing cosine similarity as the measuring function:
\begin{equation}
    \mathcal{L}_{\mathrm{LLM}} = - \frac{1}{|\mathcal{U} \bigcup \mathcal{U}_{\mathrm{atk}}|}\sum_{u \in \mathcal{U} \bigcup \mathcal{U}_{\mathrm{atk}}} \mathrm{sim}(\bm{l}_u, \bm{r}_u).
\end{equation}

The final loss function for the LCT is thus defined as:
\begin{equation}
    \mathcal{L}_{\mathrm{LCT}} = \mathcal{L}_{\mathrm{F}} + \lambda_1 \mathcal{L}_{\mathrm{ER}} + \lambda_2 \mathcal{L}_{\mathrm{LLM}},
    \label{eq:LCT}
\end{equation}
where $\lambda_1$ and $\lambda_2$ are the hyperparameters.

\subsection{Calibration for Robust Recommendation}
As mentioned above, in practical scenarios, the existence of fraudsters within the training user set $\mathcal{U}$ is typically unknown. 
The user set $\mathcal{U}$ consists of both genuine users $\mathcal{U}_{n}$ and potential fraudsters $\mathcal{U}_{f}$.
However, all users, including those potentially fraudulent, are initially assigned a label of 0, as delineated in Equation~\ref{eq:l-fd}.
In such situations, it is challenging to guarantee a distinct demarcation between the $p_u$ values of genuine users and those of potential fraudsters, as shown in the bottom scenario of Figure~\ref{fig:frq}. To address this, we implement an adaptive threshold for more effectively identifying potential fraudsters and introduce an iterative weight compensation mechanism to minimize the misidentification risk of users.

\begin{figure}
    \centering
    \includegraphics[width=3.2in]{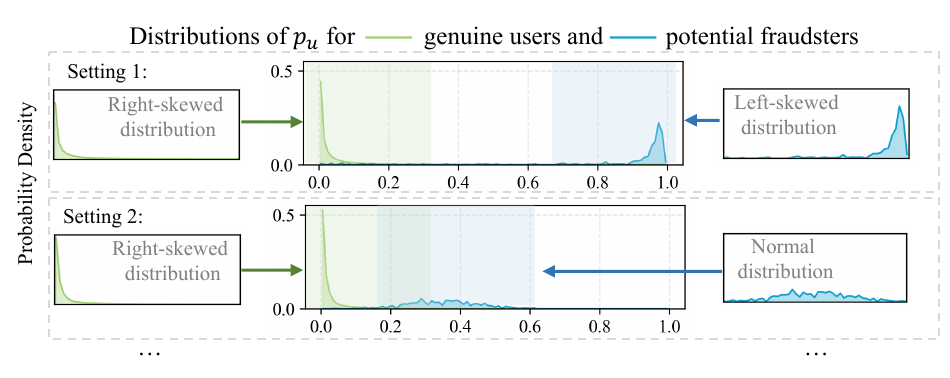}
    \caption{For LCT with different scenarios (hyperparameters, training epochs, or attack types),  the learned probability $p_u$ for genuine users  $\mathcal{U}_{n}$ and potential fraudsters $\mathcal{U}_{f}$ follow different distributions.}
    \label{fig:frq}
\end{figure}

\textbf{Observation}. A well-trained classifier often displays a skewed distribution in its logit outputs, a phenomenon sometimes referred to as over-confidence~\cite{pmlr-v162-wei22d}.
We leverage this characteristic for more effective fraudster identification. We observe that, for users in $\mathcal{U}_{n}$, the logits post-LCT projection layer exhibit a pronounced skew towards $p_u=0$, i.e., a right-skewed distribution\footnote{A right-skewed distribution is characterized by a long right tail, where the mean exceeds the median~\cite{doi:10.1080/10691898.2005.11910556}.}. For $\mathcal{U}_{f}$, the logit distribution can vary in different scenarios (different hyperparameters, training epochs, attacks, etc.), as shown in Figure~\ref{fig:frq}.

\textbf{Adaptive Threshold}. Without loss of generality, assuming a well-trained LCT, $p_u$ for $\mathcal{U}_{n}$ follow a right-skewed distribution with mean $\mu_{\mathrm{n}}$, while those for $\mathcal{U}_{f}$ approximate either a normal distribution or a left-skewed distribution with mean $\mu_{\mathrm{f}}$, where $\mu_{\mathrm{n}} < \mu_{\mathrm{f}}$. Let $\gamma$ represent the ratio $\frac{|\mathcal{U}_{f}|}{|\mathcal{U}_{n}|}$, satisfying $0 \leq \gamma < 1$. 
The mean $\mu_{\mathrm{o}}$ of the overall $p_u$ in $\mathcal{U}$ is:
% The overall $p_u$ mean $\mu_{\mathrm{o}}$ in $\mathcal{U}$ is:
\begin{equation}
    \mu_{\mathrm{o}} ~=~ \frac{1}{|\mathcal{U}|} \sum_{u \in \mathcal{U}} p_u ~=~ \frac{(\mu_{\mathrm{n}} + \gamma \mu_{\mathrm{f}})}{1+\gamma},
\end{equation}
where $\mu_{\mathrm{n}} < \mu_{\mathrm{o}} < \mu_{\mathrm{f}}$. Simply, we can employ $\mu_{\mathrm{o}}$  as the adaptive threshold to identify potential fraudsters.

\textbf{Iterative Weight Compensation}. However, directly applying $\mu_{\mathrm{o}}$ as the adaptive threshold introduces two challenges: (1) some genuine users with $p_u > \mu_{\mathrm{o}}$ may be misclassified, while (2) some potential fraudsters remain unidentified. To mitigate this, we introduce an iterative weight compensation mechanism.
We establish a weight pool tracking per-user weight coefficients $\xi_u, \forall u \in \mathcal{U}$, initially set to $\hat{\xi}$. Upon reaching certain training epochs with the LCT, $\xi_u$ is updated as:
\begin{equation}
    \xi_u(t+1) = \left\{ 
        \begin{aligned}
            \xi_u(t) - 1, \quad & \text{if } p_u > \mu_\mathrm{o}, \\
            \xi_u(t) + \frac{\sum_{u \in \mathcal{U}} \mathbb{I}\left( p_u > \mu_\mathrm{o} \right)}{\sum_{u \in \mathcal{U}} \mathbb{I}\left( p_u \leq \mu_\mathrm{o} \right)}, \quad & \text{if } p_u \leq \mu_\mathrm{o},
        \end{aligned}
    \right.
    \label{eq:update}
\end{equation}
where $\mathbb{I}(\cdot)$ returns 1 when the condition is true. The notation $\xi_u(t)$ represents the value of $\xi_u$ after $t$ updates, with $\xi_u(0) = \hat{\xi}$.

\begin{proposition}
    Consider i.i.d. samples $n_0, n_1, \dots, n_{N}$ drawn from a right-skewed distribution with mean $\mu_{\mathrm{n}}$, and $z_0, z_1, \dots, z_{Z}$ drawn from either a normal distribution or a left-skewed distribution with mean $\mu_{\mathrm{f}}$. Let $\xi_{n_i}$ and $\xi_{z_j}$ represent the weight coefficients for $n_i$ and $z_j$, respectively. We denote $\overline{\xi_{n}} = \frac{1}{N}\sum_{i=0}^N \xi_{n_i}$, and $\overline{\xi_{z}} = \frac{1}{Z}\sum_{j=0}^Z \xi_{z_j}$. Given $\mu_{\mathrm{o}}$ satisfying $\mu_{\mathrm{n}} < \mu_{\mathrm{o}} < \mu_{\mathrm{f}}$, and updating these coefficients according to Equation~\ref{eq:update}, it can be shown that $\mathbb{E}[\overline{\xi_{n}}(t) - \overline{\xi_{n}}(t+1)] > 0$ and $\mathbb{E}[\overline{\xi_{z}}(t) - \overline{\xi_{z}}(t+1)] < 0$.
    \label{pro:xi}
\end{proposition}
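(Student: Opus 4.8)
The plan is to treat Proposition~\ref{pro:xi} as a one-step statement in which the $n_i$'s and $z_j$'s are the sampled logits $p_u$, the update in Equation~\ref{eq:update} is deterministic once those samples are fixed, and the expectation is over the draw of the samples. First I would record a conservation identity: in one update the users with $p_u > \mu_{\mathrm o}$ (say $A$ of them) each lose $1$, while the remaining $B$ users each gain $A/B$, so the total change of all coefficients is exactly $0$. Writing $N$ and $Z$ for the numbers of genuine and fraudulent samples and $A_n, A_z$ for the counts above $\mu_{\mathrm o}$ within each group (so $A = A_n + A_z$ and $B = N+Z-A$), this gives $N\,[\overline{\xi_n}(t+1)-\overline{\xi_n}(t)] + Z\,[\overline{\xi_z}(t+1)-\overline{\xi_z}(t)] = 0$. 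Hence the two claimed inequalities are equivalent, and it suffices to determine the sign of one group's expected one-step drift.

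Next I would put that drift in closed form. Set $\hat q_n = A_n/N$, $\hat q_z = A_z/Z$, and $\hat q = (A_n+A_z)/(N+Z)$. The update rule yields $\overline{\xi_n}(t+1)-\overline{\xi_n}(t) = -\hat q_n + (1-\hat q_n)\,A/B$, and since $A/B = \hat q/(1-\hat q)$ this telescopes to $\overline{\xi_n}(t+1)-\overline{\xi_n}(t) = (\hat q - \hat q_n)/(1-\hat q) = \frac{Z(\hat q_z - \hat q_n)}{(N+Z)(1-\hat q)}$; the $z$-group drift is the same expression with $\hat q_n$ and $\hat q_z$ interchanged. (The degenerate case $B = 0$, i.e.\ every sample exceeds $\mu_{\mathrm o}$, leaves $A/B$ undefined but has probability at most $2^{-N}$ and can be set aside.) So each group's drift has the sign of the gap between that group's above-$\mu_{\mathrm o}$ frequency and the pooled frequency $\hat q$.

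The probabilistic content is the ordering of $q_n := \mathbb{P}(n_i > \mu_{\mathrm o})$ and $q_z := \mathbb{P}(z_j > \mu_{\mathrm o})$, the expectations of $\hat q_n$ and $\hat q_z$. Here I would use the shape hypotheses directly: by the paper's convention a right-skewed law has mean above its median, so $\mathbb{P}(n_i > \mu_{\mathrm n}) < 1/2$, and $\mu_{\mathrm o} > \mu_{\mathrm n}$ forces $q_n < 1/2$; a normal law has mean equal to its median and a left-skewed law has mean below its median, so $\mathbb{P}(z_j > \mu_{\mathrm f}) \ge 1/2$, and $\mu_{\mathrm o} < \mu_{\mathrm f}$ forces $q_z > 1/2$. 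Thus $q_n < 1/2 < q_z$, so the pooled mean $\bar q := (Nq_n + Zq_z)/(N+Z)$ lies strictly between $q_n$ and $q_z$. Exactly one of $\bar q - q_n$ and $\bar q - q_z$ is then positive and the other negative, so the genuine-group drift and the fraudster-group drift have opposite, strictly nonzero signs, which is the content of Proposition~\ref{pro:xi}.

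The step needing care -- the main obstacle -- is passing from the population drifts $(\bar q - q_n)/(1-\bar q)$ and $(\bar q - q_z)/(1-\bar q)$ to the true expectations, because $\mathbb{E}[\overline{\xi_n}(t+1)-\overline{\xi_n}(t)]$ is an expectation of a ratio of dependent Bernoulli sums and the numerator $\hat q_z - \hat q_n$ is not sign-definite. I would resolve this by concentration: by Hoeffding's inequality, $\hat q_n, \hat q_z, \hat q$ lie within $o(1)$ of $q_n, q_z, \bar q$ outside an event of probability $e^{-\Omega(N+Z)}$, and on the complementary event the ratio is bounded since its denominator $1-\hat q$ is bounded away from $0$ (as $\bar q < q_z < 1$). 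Hence $\overline{\xi_n}(t+1)-\overline{\xi_n}(t) \to (\bar q - q_n)/(1-\bar q)$ and $\overline{\xi_z}(t+1)-\overline{\xi_z}(t) \to (\bar q - q_z)/(1-\bar q)$ in $L^1$, and the strict ordering $q_n < \bar q < q_z$ yields the strict inequalities for all sufficiently large $N, Z$. If a fully finite-sample version is wanted, I would instead condition on $(A_n, A_z)$ and couple two configurations differing by moving a single flagged sample between the two populations, exploiting monotonicity of the drift in $A_z$ versus $A_n$; this removes the asymptotics at the price of a more delicate combinatorial estimate, so I would pursue it only if needed.
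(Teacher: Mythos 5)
Your proposal is correct, and its core probabilistic content coincides with the paper's: both arguments reduce the claim to the ordering $q_n=\mathbb{P}(n>\mu_{\mathrm{o}})<\tfrac12\le\mathbb{P}(z>\mu_{\mathrm{o}})=q_z$, obtained from the mean--median relation for right-skewed versus normal/left-skewed laws (the paper's $\alpha<0.5<\beta$), and then plug this into the same drift formula $(\hat q-\hat q_{\cdot})/(1-\hat q)$. Where you genuinely diverge is in how the expectation is handled. The paper computes $\mathbb{E}[\Delta\overline{\xi}_z]$ by substituting the expected counts $N\alpha$ and $\gamma N\beta$ directly into the ratio $A/B$, i.e.\ it silently replaces the expectation of a ratio of dependent binomial sums by the ratio of expectations, arriving at the clean closed form $(\alpha-\beta)/(1+\gamma-(\alpha+\gamma\beta))$; it never addresses the gap you flag. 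You instead derive the exact per-realization drift, observe the conservation identity $N\,\Delta\overline{\xi_n}+Z\,\Delta\overline{\xi_z}=0$ (which makes the two inequalities equivalent and is absent from the paper), and then justify the sign of the expectation via Hoeffding concentration, at the cost of the conclusion holding only for sufficiently large $N,Z$ (or via the finite-sample coupling you sketch). So your version buys rigor on exactly the step the paper elides, while the paper's mean-field computation buys an exact closed-form expression under an implicit deterministic approximation; both share the same residual implicit assumption that $F_z$ strictly increases on $(\mu_{\mathrm{o}},\mu_{\mathrm{f}})$ so that $q_z>\tfrac12$ strictly.
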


\begin{proof}
    Let $\alpha$ denote the probability $\mathbb{P}_n(n > \mu_{\mathrm{o}})$ and $\beta$ denote the probability $\mathbb{P}_z(z > \mu_{\mathrm{o}})$. Let $\gamma$ represent the ratio $\frac{Z}{N}$. 
    Define $\mathrm{F}_{n}(\cdot)$ as the Cumulative Distribution Function~(CDF) of the right-skewed distribution $\mathbb{P}_n(n)$, characterized by mean $\mu_\mathrm{n}$ and median $m_\mathrm{n}$, and $\mathrm{F}_{z}(\cdot)$ as the CDF of either the normal or left-skewed distribution $\mathbb{P}_z(z)$, with mean $\mu_\mathrm{f}$ and median $m_\mathrm{f}$.
    It follows that $\mathrm{F}_{n}(\mu_\mathrm{n}) > \mathrm{F}_{n}(m_\mathrm{n}) = 0.5$, and $\mathrm{F}_{z}(\mu_\mathrm{f}) \leq \mathrm{F}_{z}(m_\mathrm{f}) = 0.5$.
    With $\mu_\mathrm{n} < \mu_\mathrm{o} < \mu_\mathrm{f}$, we derive:
    \begin{equation}
        \begin{aligned}
            \alpha &= 1 - \mathrm{F}_{n}(\mu_\mathrm{o}) < 1 - \mathrm{F}_{n}(\mu_\mathrm{n}) < 0.5, \\
            \beta &= 1 - \mathrm{F}_{z}(\mu_\mathrm{o}) > 1 - \mathrm{F}_{z}(\mu_\mathrm{f}) \geq 0.5.
        \end{aligned}
    \end{equation}
    Therefore, $\alpha < \beta$.
    The change between $\overline{\xi_{z}}(t)$ and $\overline{\xi_{z}}(t+1)$ at each update is given by
    \begin{equation}
        \begin{aligned}
            \mathbb{E}[\Delta \overline{\xi}_z] &= \left[\frac{\alpha+\gamma\beta}{1+\gamma - (\alpha+\gamma\beta)} \cdot \gamma N(1-\beta) - \gamma N \beta\right] \cdot (\gamma N)^{-1} \\
            &= \frac{\alpha - \beta}{1+\gamma - (\alpha + \gamma \beta)}.
        \end{aligned}
    \end{equation}

    Since $\alpha < \beta$ and $\alpha, \beta \in (0,1)$, it follows that $\mathbb{E}[\Delta \overline{\xi}_z] < 0$. Similarly, $\mathbb{E}[\Delta \overline{\xi}_n] > 0$ for $\xi_n$. Thus, Proposition~\ref{pro:xi} is proved.
\end{proof}

Continuous training of the sequential recommender system and consequent in changes of LCT inputs for each $u \in \mathcal{U}$ lead to \textbf{evolving and rearranging} $p_u$ distributions from LCT outputs. Based on Proposition~\ref{pro:xi}, this ensures that the weight coefficients for fraudsters decrease with each update.

\textbf{Training Calibration}. We then calculate user weights as:
\begin{equation}
    w_u = q\cdot\sigma(\xi_u),
\end{equation}
where $q = \sigma(\hat{\xi})^{-1}$ so that all the weights used in the initial state are 1.0. Finally, the loss of sequential recommender system~(using SASrec~\cite{kang2018selfattentive} as an example) is:
\begin{equation}
    \mathcal{L}_{\mathrm{RS}} = - \sum_{u \in \mathcal{U}}\! w_u \!\sum_{i=1}^{|s_u|-1} \Bigl[ \log \Bigl(  \sigma \bigl( \bm{e}_{v_{i+1}}^\top \bm{e}_{v_i}' \bigr) \Bigr) {+}\!\!  \sum_{j \notin s_{u,1:i+1}}\!\!\!\! \log \Bigl( 1 - \sigma \bigl(\bm{e}_{v_j}^\top \bm{e}_{v_i}' \bigr) \Bigr)\Bigr].
\end{equation}

\section{EXPERIMENTS}
In this section, we conduct extensive experiments to answer the following research questions (\textbf{RQs}).
\begin{itemize}[leftmargin=*]
    \item \textbf{RQ1:} Whether LoRec can defend against poisoning attacks?
    \item \textbf{RQ2:} What does each component of LoRec bring?
    \item \textbf{RQ3:} Is LoRec adaptable to diverse recommendation settings and backbone models?
\end{itemize}

\begin{table}[t]
  \centering
    \caption{Dataset statistics}
    \resizebox{0.47\textwidth}{!}{
        \begin{threeparttable}

\begin{tabular}{lrrrrr}
    \toprule
    \textbf{ DATASET } & \textbf{ \#Users } & \textbf{ \#Items } & \textbf{\#Ratings}  & \textbf{Avg.length} & \textbf{Sparsity}\\
    \midrule
     Games  & 61,521 & 33,243 & 541,789 & 8.8 & 99.97\% \\
     Arts  & 71,364 & 61,505 & 600,989 & 8.4 & 99.99\% \\ 
     MIND  & 152,909 & 63,608 & 4,186,679 & 27.4 & 99.96\% \\ 
    \bottomrule
    \end{tabular}
        \end{threeparttable}
    }
  \label{tab:datasets}%
\end{table}%
\subsection{Experimental Setup}

\begin{table*}[t]
    \centering
    \caption{Robustness against target items promotion}
    \resizebox{\textwidth}{!}{
        \begin{threeparttable}

\begin{tabular}{clcccccccc}
    \toprule
    % \multirow{2}{*}{\multicolumn{1}{c}{\textbf{Dataset}}}& \multirow{2}{*}{\multicolumn{1}{c}{\textbf{Model}}} & \multicolumn{2}{c}{\textbf{Random Attack}(\%)} & \multicolumn{2}{c}{\textbf{Bandwagon Attack}(\%)} & \multicolumn{2}{c}{\textbf{DP Attack}(\%)} & \multicolumn{2}{c}{\textbf{Rev Attack}(\%)} \\ 
    \multicolumn{1}{c}{\multirow{2}{*}{\textbf{Dataset}}}& \multicolumn{1}{c}{\multirow{2}{*}{\textbf{Model}}} & \multicolumn{2}{c}{\textbf{Random Attack}(\%)} & \multicolumn{2}{c}{\textbf{Bandwagon Attack}(\%)} & \multicolumn{2}{c}{\textbf{DP Attack}(\%)} & \multicolumn{2}{c}{\textbf{Rev Attack}(\%)} \\ 
    \cmidrule(lr){3-4} \cmidrule(lr){5-6} \cmidrule(lr){7-8} \cmidrule(lr){9-10}
    & & \textbf{T-HR@50}\tnote{1} & \textbf{T-NDCG@50} & \textbf{T-HR@50} & \textbf{T-NDCG@50} & \textbf{T-HR@50} & \textbf{T-NDCG@50} & \textbf{T-HR@50} & \textbf{T-NDCG@50} \\
    \midrule
    \multirow{9}{1.2cm}{\centering \textbf{Games}} & \textbf{Backbone} & 0.889 $\pm$ 0.073 & 0.242 $\pm$ 0.006 & 0.904 $\pm$ 0.138 & 0.232 $\pm$ 0.010 & 0.458 $\pm$ 0.070 & 0.113 $\pm$ 0.005 & 0.858 $\pm$ 0.154 & 0.235 $\pm$ 0.014\\
    & ~~+\textbf{StDenoise} & 0.633 $\pm$ 0.029 & 0.174 $\pm$ 0.003 & 1.106 $\pm$ 0.150 & 0.288 $\pm$ 0.011 & 0.334 $\pm$ 0.026 & 0.079 $\pm$ 0.002 & 1.132 $\pm$ 0.136 & 0.310 $\pm$ 0.011 \\
    & ~~+\textbf{CL4Srec} & 0.748 $\pm$ 0.025 & 0.199 $\pm$ 0.002 & 1.165 $\pm$ 0.104 & 0.302 $\pm$ 0.009 & 0.529 $\pm$ 0.064 & 0.129 $\pm$ 0.005 & 1.240 $\pm$ 0.145 & 0.346 $\pm$ 0.012 \\
    & ~~+\textbf{APR} & 0.377 $\pm$ 0.047 & 0.162 $\pm$ 0.012 & 0.756 $\pm$ 0.056 & 0.224 $\pm$ 0.005 & 0.449 $\pm$ 0.069 & 0.118 $\pm$ 0.006 & 0.362 $\pm$ 0.002 & 0.126 $\pm$ 0.000 \\
    & ~~+\textbf{ADVTrain} & 0.962 $\pm$ 0.065 & 0.294 $\pm$ 0.008 & 1.170 $\pm$ 0.017 & 0.305 $\pm$ 0.001 & 0.336 $\pm$ 0.046 & 0.082 $\pm$ 0.003 & 0.713 $\pm$ 0.088 & 0.210 $\pm$ 0.010 \\
    & ~~+\textbf{GrapRfi} & 0.819 $\pm$ 0.037 & 0.225 $\pm$ 0.003 & 0.895 $\pm$ 0.075 & 0.231 $\pm$ 0.006 & 0.506 $\pm$ 0.024 & 0.122 $\pm$ 0.001 & 0.950 $\pm$ 0.137 & 0.267 $\pm$ 0.013\\
    \cmidrule{2-10} 
    & ~~+\textbf{LLM4Dec} & 0.303 $\pm$ 0.009 & 0.078 $\pm$ 0.001 & 0.235 $\pm$ 0.006 & 0.057 $\pm$ 0.000 & 0.319 $\pm$ 0.008 & 0.077 $\pm$ 0.001 & 0.432 $\pm$ 0.020 & 0.112 $\pm$ 0.001\\
    & ~~+\textbf{LoRec} & \textbf{0.068 $\pm$ 0.002} & \textbf{0.016 $\pm$ 0.000} & \textbf{0.105 $\pm$ 0.007} & \textbf{0.024 $\pm$ 0.000} & \textbf{0.103 $\pm$ 0.001} & \textbf{0.024 $\pm$ 0.000} & \textbf{0.080 $\pm$ 0.001} & \textbf{0.019 $\pm$ 0.000}\\
    \cmidrule{3-10}
    & \multicolumn{1}{c}{Gain\tnote{2}} & +81.97\% $\uparrow$ & +89.89\% $\uparrow$& +86.16\% $\uparrow$ & +89.10\% $\uparrow$& +69.04\% $\uparrow$ & +69.61\% $\uparrow$& +78.05\% $\uparrow$ & +84.73\% $\uparrow$\\
    \midrule
    \multirow{9}{1.2cm}{\centering \textbf{Arts}} 
    & \textbf{Backbone} &5.646 $\pm$ 1.030 & 1.926 $\pm$ 0.298 & 4.078 $\pm$ 1.168 & 1.109 $\pm$ 0.109 & 1.978 $\pm$ 0.529 & 0.479 $\pm$ 0.044 & { \footnotesize OOM}\tnote{3} & { \footnotesize OOM}\\
    & ~~+\textbf{StDenoise} & 4.498 $\pm$ 0.979 & 1.312 $\pm$ 0.100 & 4.822 $\pm$ 0.327 & 1.340 $\pm$ 0.028 & 2.195 $\pm$ 0.974 & 0.611 $\pm$ 0.090 & { \footnotesize OOM} & { \footnotesize OOM}\\
    & ~~+\textbf{CL4Srec} & 4.988 $\pm$ 0.926 & 1.479 $\pm$ 0.119 & 4.517 $\pm$ 0.710 & 1.282 $\pm$ 0.080 & 1.676 $\pm$ 0.320 & 0.420 $\pm$ 0.024 & { \footnotesize OOM} & { \footnotesize OOM} \\
    & ~~+\textbf{APR} & 5.331 $\pm$ 0.696 & 1.467 $\pm$ 0.464 & 3.762 $\pm$ 0.619 & 1.077 $\pm$ 0.443 & 1.943 $\pm$ 0.128 & 0.917 $\pm$ 0.010 & { \footnotesize OOM} & { \footnotesize OOM}\\
    & ~~+\textbf{ADVTrain} & 3.520 $\pm$ 0.927 & 1.009 $\pm$ 0.089 & 4.659 $\pm$ 3.614 & 1.316 $\pm$ 0.370 & 1.886 $\pm$ 0.338 & 0.504 $\pm$ 0.031 & { \footnotesize OOM} & { \footnotesize OOM} \\
    & ~~+\textbf{GrapRfi} & 5.331 $\pm$ 0.609 & 1.553 $\pm$ 0.048 & 3.542 $\pm$ 1.544 & 0.957 $\pm$ 0.120 & 1.814 $\pm$ 0.408 & 0.452 $\pm$ 0.025 & { \footnotesize OOM} & { \footnotesize OOM}\\
    \cmidrule{2-10} 
    & ~~+\textbf{LLM4Dec} & 1.338 $\pm$ 0.194 & 0.342 $\pm$ 0.015 & 0.679 $\pm$ 0.021 & 0.176 $\pm$ 0.002 & 0.790 $\pm$ 0.015 & 0.197 $\pm$ 0.001 & { \footnotesize OOM} & { \footnotesize OOM}\\
    & ~~+\textbf{LoRec} & \textbf{0.576 $\pm$ 0.027} & \textbf{0.141 $\pm$ 0.002} & \textbf{0.436 $\pm$ 0.061} & \textbf{0.108 $\pm$ 0.004} & \textbf{0.440 $\pm$ 0.042} & \textbf{0.109 $\pm$ 0.003} & { \footnotesize OOM} & { \footnotesize OOM}\\
    \cmidrule{3-10}
    & \multicolumn{1}{c}{Gain} & +42.42\% $\uparrow$ & +85.89\% $\uparrow$& +56.44\% $\uparrow$ & +88.72\% $\uparrow$& +56.02\% $\uparrow$ & +74.08\% $\uparrow$& - & - \\
    \midrule
    \multirow{9}{1.2cm}{\centering \textbf{MIND}} 
    & \textbf{Backbone} & 0.215 $\pm$ 0.015 & 0.073 $\pm$ 0.002 & 0.259 $\pm$ 0.006 & 0.083 $\pm$ 0.001 & 0.099 $\pm$ 0.002 & 0.026 $\pm$ 0.002 & { \footnotesize OOM} & { \footnotesize OOM}\\
    & ~~+\textbf{StDenoise} & 0.193 $\pm$ 0.001 & 0.062 $\pm$ 0.001 & 0.337 $\pm$ 0.028 & 0.111 $\pm$ 0.003 & 0.088 $\pm$ 0.003 & 0.025 $\pm$ 0.002 & { \footnotesize OOM} & { \footnotesize OOM}\\
    & ~~+\textbf{CL4SRec} & 0.166 $\pm$ 0.008 & 0.054 $\pm$ 0.001 & 0.259 $\pm$ 0.015 & 0.082 $\pm$ 0.002 & 0.093 $\pm$ 0.001 & 0.026 $\pm$ 0.000 & { \footnotesize OOM} & { \footnotesize OOM} \\
    & ~~+\textbf{APR} & 0.135 $\pm$ 0.020 & 0.047 $\pm$ 0.001 & 0.256 $\pm$ 0.004 & 0.088 $\pm$ 0.001 & 0.082 $\pm$ 0.002 & 0.024 $\pm$ 0.001 & { \footnotesize OOM} & { \footnotesize OOM}\\
    & ~~+\textbf{ADVTrain} & 0.141 $\pm$ 0.001 & 0.048 $\pm$ 0.000 & 0.319 $\pm$ 0.010 & 0.104 $\pm$ 0.001 & 0.127 $\pm$ 0.003 & 0.036 $\pm$ 0.000 & { \footnotesize OOM} & { \footnotesize OOM} \\
    & ~~+\textbf{GrapRfi} & 0.118 $\pm$ 0.001 & 0.038 $\pm$ 0.001 & 0.206 $\pm$ 0.003 & 0.067 $\pm$ 0.001 & 0.123 $\pm$ 0.003 & 0.034 $\pm$ 0.001 & { \footnotesize OOM} & { \footnotesize OOM}\\
    % [1mm]
    \cmidrule{2-10} 
    & ~~+\textbf{LLM4Dec} & 0.044 $\pm$ 0.001 & 0.013 $\pm$ 0.000 & 0.219 $\pm$ 0.003 & 0.068 $\pm$ 0.002 & 0.051 $\pm$ 0.000 & 0.014 $\pm$ 0.002 & { \footnotesize OOM} & { \footnotesize OOM}\\
    & ~~+\textbf{LoRec} & \textbf{0.005 $\pm$ 0.001} & \textbf{0.001 $\pm$ 0.000} & \textbf{0.012 $\pm$ 0.001} & \textbf{0.003 $\pm$ 0.001} & \textbf{0.004 $\pm$ 0.001} & \textbf{0.001 $\pm$ 0.000} & { \footnotesize OOM} & { \footnotesize OOM}\\
    \cmidrule{3-10}
    & \multicolumn{1}{c}{Gain} 
    & +95.61\% $\uparrow$ & +96.28\% $\uparrow$& +94.29\% $\uparrow$ & +95.10\% $\uparrow$& +95.07\% $\uparrow$ & +95.92\% $\uparrow$& - & -\\
    \bottomrule
\end{tabular}
\begin{tablenotes}
    \item[1] Target Item Hit Ratio (Equation~\ref{eq:tar}); T-HR@50 and T-NDCG@50 of all target items on clean datasets are 0.000.
    \item[2] The relative percentage increase of LoRec's metrics to the best value of other baselines' metrics, i.e., $\left(\min\left(\mathrm{T}\text{-}\mathrm{HR}_\mathrm{Beslines}\right) - \mathrm{T}\text{-}\mathrm{HR}_\mathrm{LoRec} \right)/ \min(\mathrm{T}\text{-}\mathrm{HR}_\mathrm{Beslines})$.
    \item[3] The Rev attack method could not be executed on the dataset due to memory constraints, resulting in an out-of-memory error.
\end{tablenotes}
   
        \end{threeparttable}
    }
\label{tab:attack_per}%
\end{table*}

\subsubsection{Datasets}
In our evaluation of LoRec, we employ three widely recognized datasets: the Amazon review datasets (\textbf{Games} and \textbf{Arts})~\cite{ni2019justifying}, and the \textbf{MIND} news recommendation dataset~\cite{wu2020mind}. For the Amazon datasets, all users are considered. For the MIND dataset, a subset of users is sampled following~\cite{li2023exploring}. Consistent with existing practices~\cite{rendle2012bpr, kang2018selfattentive}, we exclude users with fewer than 5 interactions. For each user, the data split process involves (1) using the most recent action for testing, (2) the second most recent action for validation, and (3) all preceding actions (up to 50) for training~\cite{kang2018selfattentive, li2023exploring, yuan2023where}. Dataset statistics are provided in Table~\ref{tab:datasets}.

\subsubsection{Backbone Models}
We employ three backbone sequential recommender systems:
\begin{itemize}[leftmargin=*]
    \item \textbf{GRU4rec}~\cite{hidasi2015session} utilizes Recurrent Neural Networks~\cite{cho2014learning} to model user interaction sequences in session-based recommendations.
    \item \textbf{SASrec}~\cite{kang2018selfattentive} employs a multi-head self-attention mechanism in Transformer~\cite{vaswani2017attention} for sequential recommendations.
    \item \textbf{FMLPrec}~\cite{zhou2022filterenhanced} is an all-Multilayer Perceptron model with a learnable filter-enhanced block for noise reduction in embeddings for sequential recommendations.
\end{itemize}
\textbf{Due to space limitations}, we predominantly show results using SASrec as the backbone model in Section~\ref{sec:performance}. Results for GRU4rec and FMLPrec are described in Section~\ref{sec:diff_rec}. Additionally, we consider both Text-based and ID-based sequential recommender systems as depicted in Figure~\ref{fig:framework}(a), but mainly focus on Text-based version, while ID-based results are elaborated in Section~\ref{sec:diff_set}.

\subsubsection{Baselines for Defense}
We include various methods such as the adversarial training methods APR~\cite{he2018adversarial} and ADVTrain~\cite{yue2022defending}; a detection-based method, GraphRfi~\cite{zhang2020gcnbased}. We also include two denoise-based approaches, StDenoise~\cite{tian2022learning, ye2023towards} and CL4Srec~\cite{xie2022contrastive}. 
\begin{itemize}[leftmargin=*]
    \item \textbf{APR}~\cite{he2018adversarial}: Adversarial training in recommender systems generates small parameter perturbations and integrates these perturbations into training.
    % Adversarial training in recommendation systems, generating small parameter perturbations using Stochastic Gradient Descent (SGD) and integrating these perturbations into training.
    \item \textbf{ADVTrain}~\cite{yue2022defending}: This adversarial training technique counters profile pollution attacks in sequential recommender systems.
    \item \textbf{GraphRfi}~\cite{zhang2020gcnbased}: A Graph Convolutional Network and Neural Random Forest-based framework for fraudster detection during training. We adapt it in sequential recommendations.
    \item \textbf{StDenoise}~\cite{tian2022learning, ye2023towards}: A structural denoising method that leveraging similarity between $\bm{e}_v$ and $\bm{e}_v'$ for $v \in s_u$ as evidence for noise purification, as employed in \cite{tian2022learning, ye2023towards}.
    \item \textbf{CL4Srec}~\cite{xie2022contrastive}: A contrastive learning framework for sequential recommender systems to resist noise. We specifically use the ``Crop'' data augmentation technique.
\end{itemize}
Additionally, we present a simplified version of LoRec as a baseline, named \textbf{LLM4Dec}, which focuses solely on detection using LLM-encoded user behavior $\bm{l}_u$, i.e., $p_u = h(\bm{l}_u)$.

\subsubsection{Attack Methods}
We consider heuristic attacks, i.e., Random Attack~\cite{lam2004shilling} and Bandwagon Attack~\cite{mobasher2007toward}, alongside optimized-based methods, i.e., DP Attack~\cite{huang2021data} and Rev Attack~\cite{tang2020revisiting}. These attacks are executed in a black-box scenario where the attackers lack knowledge about the victim model’s architecture and parameters.
\begin{itemize}[leftmargin=*]
    \item \textbf{Random Attack} (Heuristic)~\cite{lam2004shilling}: Interactions of fraudsters include both target items and randomly selected items.
    \item \textbf{Bandwagon Attack} (Heuristic)~\cite{mobasher2007toward}: Interactions of fraudsters include target items and items selected based on their popularity.
    \item \textbf{DP Attack} (Optimization-based)~\cite{huang2021data}: Targeting deep learning recommender systems specifically.
    \item \textbf{Rev Attack} (Optimization-based)~\cite{tang2020revisiting}: Framing attacks as bi-level optimization problem solved by gradient-based techniques.
\end{itemize}

\subsubsection{Evaluation Metrics}
For recommendation performance, the primary metrics are the top-$k$ recommendation performance metrics: Hit Ratio at $k$ ($\mathrm{HR}@k$) and Normalized Discounted Cumulative Gain at $k$ ($\mathrm{NDCG}@k$), as referenced in \cite{zhang2023robust, kang2018selfattentive}. To assess the success ratio of attacks, we use metrics specific to the target items' top-$k$ performance, denoted as $\mathrm{T}\text{-}\mathrm{HR}@k$ and $\mathrm{T}\text{-}\mathrm{NDCG}@k$~\cite{tang2020revisiting, huang2021data}:
\begin{equation}
    \mathrm{T}\text{-}\mathrm{HR}@k = \frac{1}{|\mathcal{T}|} \sum_{\mathit{tar} \in \mathcal{T}} \frac{ \sum_{u \in \mathcal{U}_{n} - \mathcal{U}_{n, \mathit{tar}}} \mathbb{I}\left(\mathit{tar} \in L_{u_{1:k}}\right)}{|\mathcal{U}_{n} - \mathcal{U}_{n, \mathit{tar}}|},
    \label{eq:tar}
\end{equation}
where $\mathcal{T}$ represents the set of target items, $\mathcal{U}_{n, \mathit{tar}}$ denotes the set of genuine users who interacted with target item $\mathit{tar}, L_{u_{1:k}}$ is the top-$k$ recommendation list for user $u$, and $\mathbb{I}(\cdot)$ is an indicator function that returns 1 if the condition is true. $\mathrm{T}\text{-}\mathrm{NDCG}@k$ mirrors $\mathrm{T}\text{-}\mathrm{HR}@k$, serving as the target item-specific version of $\mathrm{NDCG}@k$.

Additionally, in line with the robustness definition presented in \cite{zhang2023robust}, we consider the variance in recommendation performance before and after an attack as an indicator of robustness, termed top-$k$ Recommendation Consistency ($\mathrm{RC}@k$):
\begin{equation}
    \mathrm{RC}_{\mathrm{HR}}@k = 1 - \frac{|\mathrm{HR}@k - \mathrm{HR}@k_{\mathrm{clean}}|}{\mathrm{HR}@k_{\mathrm{clean}}},
    \label{eq:rc}
\end{equation}
where $\mathrm{HR}@k_{\mathrm{clean}}$ is the top-$k$ HR on clean dataset. $\mathrm{RC}_{\mathrm{NDCG}}@k$ is calculated similarly for the $\mathrm{NDCG}@k$.

\begin{table*}[t]
    \centering
    \caption{Recommendation consistency}
    \resizebox{\textwidth}{!}{
        \begin{threeparttable}

\begin{tabular}{lcccccccc}
    \toprule
    \multicolumn{1}{c}{\multirow{2}{*}{\textbf{Model}}} & \multicolumn{2}{c}{\textbf{Clean} (\%)} & \multicolumn{2}{c}{\textbf{Random Attack} (\%)} & \multicolumn{2}{c}{\textbf{Bandwagon Attack} (\%)} & \multicolumn{2}{c}{\textbf{DP Attack} (\%)}  \\ 
    \cmidrule(lr){2-3} \cmidrule(lr){4-5} \cmidrule(lr){6-7} \cmidrule(lr){8-9}
    & \textbf{RC$_{\bm{\mathrm{HR}}}$@10 (HR)\tnote{1}} & \textbf{RC$_{\bm{\mathrm{NDCG}}}$@10 (NDCG)} & \textbf{RC$_{\bm{\mathrm{HR}}}$@10 (HR)} & \textbf{RC$_{\bm{\mathrm{NDCG}}}$@10 (NDCG)} & \textbf{RC$_{\bm{\mathrm{HR}}}$@10 (HR)} & \textbf{RC$_{\bm{\mathrm{NDCG}}}$@10 (NDCG)} & \textbf{RC$_{\bm{\mathrm{HR}}}$@10 (HR)} & \textbf{RC$_{\bm{\mathrm{NDCG}}}$@10 (NDCG)}  \\
    \midrule
     \textbf{Backbone} & - (11.289) & - (5.628) & 99.46\% (11.228) & 99.73\% (5.613) & 98.52\% (11.122) & 98.44\% (5.540) & 98.64\% (11.442) & 98.26\% (5.726) \\
     \cmidrule{1-9}
     ~~+\textbf{StDenoise} & 98.72\% (11.144) & 98.79\% (5.560) & 99.68\% (11.253) & 99.63\% (5.607) & \textbf{99.97\%} (11.286) & 98.79\% (5.696) & 99.98\% (11.287) & \textbf{99.96\%} (5.630) \\
     ~~+\textbf{CL4Srec} & 99.17\% (11.383) & 99.89\% (5.567) & 97.67\% (11.026) & 97.85\% (5.507) & 99.03\% (11.179) & 99.00\% (5.572) & 99.87\% (11.304) & 99.88\% (5.635) \\
     ~~+\textbf{APR} & 90.81\% (10.252) & 92.11\% (5.184) & 94.06\% (10.618) & 97.21\% (5.471) & 90.18\% (10.180) & 91.33\% (5.140) & 89.09\% (10.057) & 91.58\% (5.154) \\
     ~~+\textbf{ADVTrain} & 99.15\% (11.193) & 96.09\% (5.408) & 99.47\% (11.229) & 96.57\% (5.435) & 98.89\% (11.163) & 96.57\% (5.435) & 98.35\% (11.475) & 97.97\% (5.514) \\
     ~~+\textbf{GraphRfi} & 99.28\% (11.370) & 98.60\% (5.707) & 98.41\% (11.469) & 98.53\% (5.711) & 99.34\% (11.215) & 98.97\% (5.570) & 98.86\% (11.418) & 97.73\% (5.756) \\
     \cmidrule{1-9}
     ~~+\textbf{LLM4Dec} & 98.90\% (11.165) & 98.56\% (5.547) & 99.49\% (11.231) & 99.63\% (5.607) & 96.68\% (10.914) & 96.64\% (5.439) & 98.20\% (11.086) & 98.56\% (5.547) \\
     ~~+\textbf{LoRec} & \textbf{99.66\%} (11.327) & \textbf{99.00\%} (5.684) & \textbf{99.96\%} (11.293) & \textbf{99.80\%} (5.639) & 99.06\% (11.183) & \textbf{99.40\%} (5.594) & \textbf{99.98\%} (11.287) & 99.34\% (5.665) \\

    \bottomrule

\end{tabular}
\begin{tablenotes}
    \item[1] Recommendation Consistency (Equation~\ref{eq:rc}). Here, HR and NDCG are abbreviations of HR@10 and NDCG@10, respectively.
\end{tablenotes}
   
        \end{threeparttable}
    }
\label{tab:attack_rc}%
\end{table*}

\subsubsection{Implementation Details} 
We adopt \textbf{Llama2} family into LCT. For evaluating the success ratio of attacks, we set $k=50$~\cite{huang2021data, tang2020revisiting, wu2021fight}. For recommendation performance metrics, we utilize $k=10$~\cite{kang2018selfattentive, zhou2022filterenhanced}.
For both defense methods and backbone models, the learning rate is selected from \{0.1, 0.01, $\dots, 1 \times 10^{-5}$\}. Similarly, weight decay is chosen from \{0, 0.1, $\dots, 1 \times 10^{-5}$\}.
Backbone model architectures follow their original publications. For the APR, we select the parameter $\epsilon$ from within the range of \{0.02, 0.03, $\dots, 0.05$\}. This selection is based on the observation that values of $\epsilon$ greater than 0.05 tend to negatively impact recommendation performance. The implementation of GraphRfi follows its paper. For the detection-based methods and our method, the Bandwagon Attack is provided as supervised fraudster data, denoted as $\mathcal{U}_{\mathrm{atk}}$. The size of $\mathcal{U}_{\mathrm{atk}}$ is consistently set at $10\%$ of $|\mathcal{U}|$. Regarding the attack methods, the attack budget is established at $1\%$ with five target items. The hyperparameters are in alignment with those detailed in their original publications. Our implementation code is accessible via the provided link\footnote{\url{https://github.com/Kaike-Zhang/LoRec}}.

\subsection{Performance Comparison~(RQ1)}
\label{sec:performance}

In this section, we answer \textbf{RQ1}. We focus on two key aspects: its robustness against poisoning attacks and its robustness in maintaining recommendation performance.

\subsubsection{Robustness Against Poisoning Attacks}
We evaluate the efficacy of LoRec in defending against poisoning attacks, focusing on the success ratio of attacks. In our experiments, we specifically target extremely unpopular items, resulting in values of $\mathrm{T}\text{-}\mathrm{HR}@50$ and $\mathrm{T}\text{-}\mathrm{NDCG}@50$ being 0.0 without any attack. 
\textbf{Note}: The lower the values of $\mathrm{T}\text{-}\mathrm{HR}@50$ and $\mathrm{T}\text{-}\mathrm{NDCG}@50$, the better the defensibility.
Table~\ref{tab:attack_per} shows that denoise-based methods, such as StDenoise and CL4Srec, exhibit unstable performance against attacks, sometimes even increasing the attack success ratio, suggesting the inadequacy of simple denoising in attack defense. GraphRfi generally exhibits better defense against attacks similar to its supervised data. However, with different attack types, such as DP attacks and Rev Attacks, GraphRfi's effectiveness significantly diminishes, potentially leading to an increased success ratio of attacks.

In contrast, LLM4Dec, depending exclusively on LLMs' knowledge for detection, surpasses most baselines. This highlights the capability of LLMs' knowledge in identifying fraudsters.
Furthermore, LoRec demonstrates superior performance compared to the baselines, significantly lowering the success ratio of attacks. For the three datasets, LoRec reduces the average $\mathrm{T}\text{-}\mathrm{HR}@50$ by 78.81\%, 51.63\%, and 94.99\%, and the average $\mathrm{T}\text{-}\mathrm{NDCG}@50$ by 83.33\%, 82.90\%, and 95.76\%, respectively, compared to the best results of the baselines. These results demonstrate the effectiveness of LoRec in defending against various known or unknown attacks.

\subsubsection{Robustness in Maintaining Recommendation Performance}
We evaluate LoRec's ability to preserve recommendation consistency in the face of poisoning attacks. Due to space limitations, we present results only for the MIND dataset. As shown in Table~\ref{tab:attack_rc}, LoRec surpasses or is comparable with existing defense mechanisms, achieving an average $\mathrm{RC}_{\mathrm{HR}}@10$ of 99.67\% and $\mathrm{RC}_{\mathrm{NDCG}}@10$ of 99.39\%.

\begin{figure}
    \centering
    \includegraphics[width=3in]{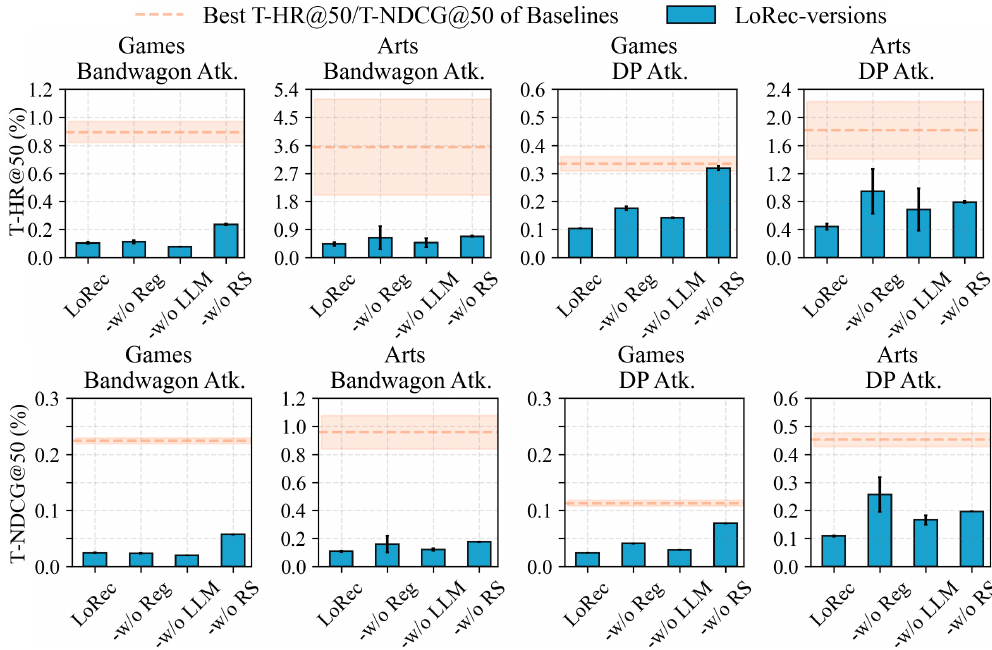}
    \caption{Ablation study}
    \label{fig:ab_study}
\end{figure}

\subsection{Augmentation Analysis~(RQ2)}

\begin{figure}
    \centering
    \includegraphics[width=3in]{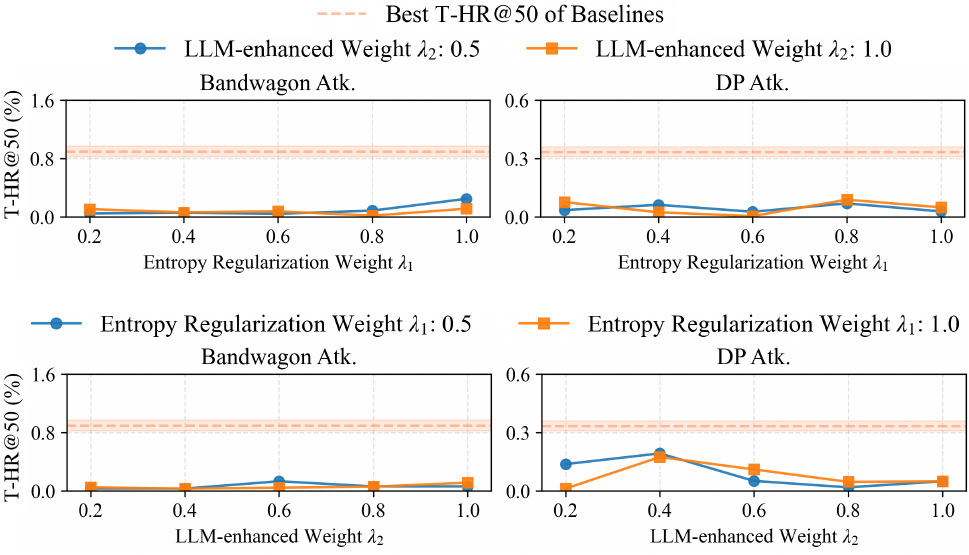}
    \caption{Hyperparameter analysis on Entropy Regularization $\lambda_1$ and LLM-enhanced weights $\lambda_2$}
    \label{fig:para_anal}
\end{figure}

\begin{figure}
    \centering
    \includegraphics[width=3in]{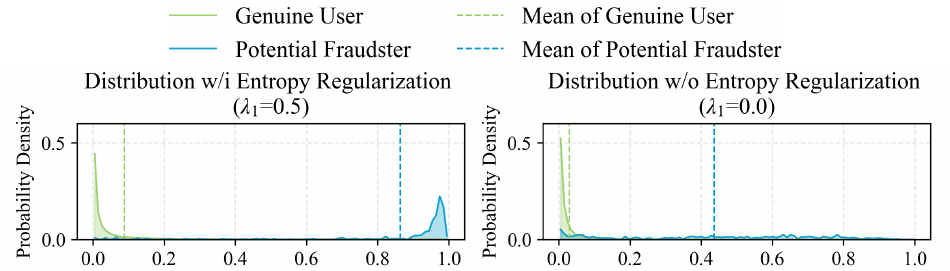}
    \caption{Influence of Entropy Regularization on distributions of genuine users and potential fraudsters}
    \label{fig:entropy}
\end{figure}

In this section, we answer \textbf{RQ2}. Our focus is on analyzing the efficacy of each component within LoRec and examining the impact of hyperparameters and the size of the LLM. Due to space limitations, we only show the results on Games and Arts, while the MIND dataset holds consistent results.

\subsubsection{Ablation Study}
We first examine different variants of LoRec to evaluate the contribution of each component to its overall performance. The variants are defined as follows:
\begin{itemize}[leftmargin=*]
    \item \textbf{-w/o Reg}: This version operates without Entropy Regularization, setting $\lambda_1 = 0$ in Equation~\ref{eq:LCT}.
    \item \textbf{-w/o LLM}: This version operates without the open-world knowledge provided by LLMs, setting $\lambda_2 = 0$ in Equation~\ref{eq:LCT}.
    \item \textbf{-w/o RS}: This version operates without feedback from sequential recommender systems, rendering it equivalent to LLM4Dec.
\end{itemize}

As depicted in Figure~\ref{fig:ab_study}, our findings reveal that each component plays a vital role in ensuring robust recommendations. All variants of LoRec outperform the best metrics achieved by the baseline models. Specifically, Entropy Regularization stabilizes the defense capability, reducing the variance of the performance. Additionally, it is observed that the LLMs' open-world knowledge has a more pronounced impact when countering the unknown DP attacks, as compared to the Bandwagon Attack which is the same type of supervised data. LLMs' knowledge aids LoRec in deriving a more general pattern beyond the supervised data, enhancing its effectiveness in tackling other unseen attacks. Lastly, feedback from the sequential recommender system emerges as a crucial element. This feedback significantly enhances LoRec's ability to accurately mitigate fraudsters' impact for achieving robust recommendations.

\subsubsection{Hyperparameter Analysis}
We explore the effects of hyperparameters, i.e., Entropy Regularization Weight $\lambda_1$ and LLM-Enhanced Weight $\lambda_2$ as defined in Equation~\ref{eq:LCT}. We investigate the influence of varying $\lambda_2$ while holding $\lambda_1$ constant at either 0.5 or 1.0, and conversely, altering $\lambda_1$ while $\lambda_2$ remains fixed at 0.5 or 1.0. As shown in Figure~\ref{fig:para_anal}, the results indicate that the performance remains comparatively stable despite changes in $\lambda_1$ or $\lambda_2$. 

As illustrated in the lower part of Figure~\ref{fig:para_anal}, adjustments in $\lambda_2$ reveal that, for the Bandwagon Attack (mirroring the supervised data), a smaller $\lambda_2$ is sufficient for LoRec's effective defense. This sufficiency stems from the supervised data providing ample information for defending against similar attacks. Conversely, for the DP attacks (differing from the supervised data), a larger $\lambda_2$ enhances LoRec's performance. This improvement is attributed to the additional useful open-world knowledge, which assists the model in learning more general patterns of attacks.

Moreover, we investigate the influence of Entropy Regularization Weight $\lambda_1$ on the distributions of $p_u$. By preventing the model from making extreme predictions, Entropy Regularization contributes to more distinctly separating the $p_u$ of potential fraudsters from those of genuine users, as illustrated in Figure~\ref{fig:entropy}.

\subsubsection{LLMs' Size Analysis}
We investigate the influence of LLMs in varying sizes on LoRec's performance. As shown in Figure~\ref{fig:LLM_version}, we find that as the size of the LLMs increases, there is a general trend towards enhanced defensibility against poisoning attacks in LoRec.

\begin{figure}
    \centering
    \includegraphics[width=3.2in]{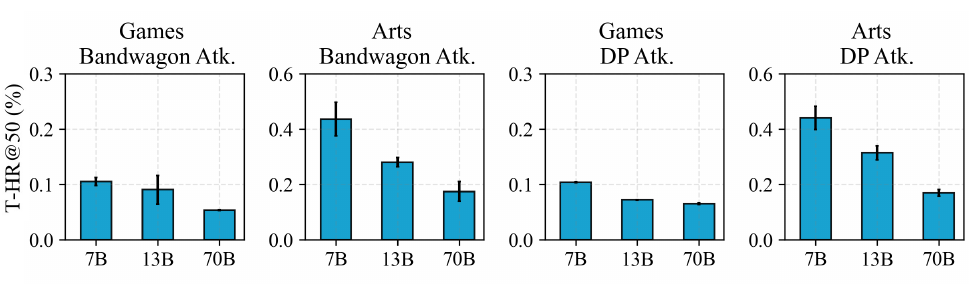}
    \caption{Performance comparison of LoRec variants utilizing Llama2 with different sizes}
    \label{fig:LLM_version}
\end{figure}

\begin{figure}
    \centering
    \includegraphics[width=3.2in]{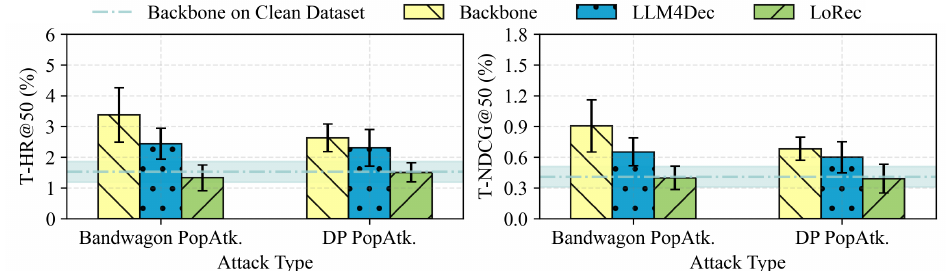}
    \caption{Robustness against the popular items promotion}
    \label{fig:pop_item}
\end{figure}

\subsection{Generality of LoRec~(RQ3)}

We answer \textbf{RQ3} by evaluating LoRec's performance in diverse settings and across various backbone sequential recommender systems. 
Due to space constraints and the fact that LLM4Dec surpasses all baseline methods (refers to Table~\ref{tab:attack_per}), we present results focusing on the Games and Arts datasets with LLM4Dec and LoRec.

\subsubsection{Defensibility against Different Types of Target Item}
Here, we evaluate the defensive capabilities against attacks targeting popular items, as presented in Figure~\ref{fig:pop_item}. These findings reveal that LoRec maintains a robust defense even against attacks aimed at promoting popular items, closely aligning with the backbone model's $\mathrm{T}\text{-}\mathrm{HR}@50$ and $\mathrm{T}\text{-}\mathrm{NDCG}@50$ metrics on clean datasets.

\subsubsection{Adaptation to Different Settings in Sequential Recommender Systems}
\label{sec:diff_set}
We investigate the adaptability of LoRec across different configurations to assess its generalizability. In the ID-based setup, LoRec consistently delivers the most robust outcomes, confirming its superior performance, as detailed in Table~\ref{tab:rec_id}.

\subsubsection{Generalization Across Various Recommender Systems}
\label{sec:diff_rec}
To ascertain LoRec's broad applicability, we evaluate its performance using different backbone sequential recommendation models. As indicated in Table~\ref{tab:rec_others}, we employ GRU4rec~\cite{hidasi2015session} and FMLPrec~\cite{zhou2022filterenhanced} as alternative backbone sequential recommender systems. It is noted that FMLPrec itself demonstrates the strongest robustness against poisoning attacks among SASrec (the backbone model in Table~\ref{tab:attack_per}), GRU4rec, and FMLPrec. For all these backbone models, LoRec consistently achieves the most robust results against attacks.

\begin{table}[t]
    \centering
    \caption{Robustness against target items promotion of ID-based sequential recommender system}
    \resizebox{0.47\textwidth}{!}{
        \begin{threeparttable}

\begin{tabular}{clcccc}
    \toprule
    \multicolumn{1}{c}{\multirow{2}{*}{\textbf{Dataset}}}& \multicolumn{1}{c}{\multirow{2}{*}{\textbf{Model}}} & \multicolumn{2}{c}{\textbf{Bandwagon Attack}(\%)} & \multicolumn{2}{c}{\textbf{DP Attack}(\%)}\\ 
    \cmidrule(lr){3-4} \cmidrule(lr){5-6}
    & & \textbf{T-HR@50} & \textbf{T-NDCG@50} & \textbf{T-HR@50} & \textbf{T-NDCG@50} \\
    \midrule
    \multirow{3}{1.2cm}{\centering \textbf{Games}} & \textbf{Backbone} & 1.517 $\pm$ 0.677 & 0.416 $\pm$ 0.053 & 0.275 $\pm$ 0.043 & 0.061 $\pm$ 0.002\\
    & ~~+\textbf{LLM4Dec} & 0.499 $\pm$ 0.185 & 0.130 $\pm$ 0.014 & 0.237 $\pm$ 0.054 & 0.052 $\pm$ 0.003\\
    & ~~+\textbf{LoRec} & \textbf{0.292 $\pm$ 0.166} & \textbf{0.080 $\pm$ 0.014} & \textbf{0.086 $\pm$ 0.015} & \textbf{0.020 $\pm$ 0.001}\\
    \midrule
    \multirow{3}{1.2cm}{\centering \textbf{Arts} } & \textbf{Backbone} & 7.098 $\pm$ 3.853 & 2.603 $\pm$ 0.240 & 1.708 $\pm$ 0.631 & 0.427 $\pm$ 0.038\\
    & ~~+\textbf{LLM4Dec} & 1.261 $\pm$ 0.034 & 0.366 $\pm$ 0.002 & 1.092 $\pm$ 0.003 & 0.320 $\pm$ 0.000\\
    & ~~+\textbf{LoRec} & \textbf{0.609 $\pm$ 0.507} & \textbf{0.162 $\pm$ 0.035} & \textbf{0.224 $\pm$ 0.016} & \textbf{0.051 $\pm$ 0.001}\\
    \bottomrule

\end{tabular}
   
        \end{threeparttable}
    }
\label{tab:rec_id}%
\end{table}

\begin{table}[t]
    \centering
    \caption{Robustness against target items promotion of GRU4rec and FMLPrec}
    \resizebox{0.47\textwidth}{!}{
        \begin{threeparttable}

\begin{tabular}{clcccc}
    \toprule
    \multicolumn{1}{c}{\multirow{2}{*}{\textbf{Dataset}}}& \multicolumn{1}{c}{\multirow{2}{*}{\textbf{Model}}} & \multicolumn{2}{c}{\textbf{Bandwagon Attack}(\%)} & \multicolumn{2}{c}{\textbf{DP Attack}(\%)}\\ 
    \cmidrule(lr){3-4} \cmidrule(lr){5-6}
    & & \textbf{T-HR@50} & \textbf{T-NDCG@50} & \textbf{T-HR@50} & \textbf{T-NDCG@50} \\
    \midrule
    \multirow{6}{1.2cm}{\centering \textbf{Games} } & \textbf{GRU4rec} & 1.623 $\pm$ 0.143 & 0.462 $\pm$ 0.014 & 1.141 $\pm$ 0.460 & 0.310 $\pm$ 0.046 \\
    & ~~+\textbf{LLM4Dec} & 0.605 $\pm$ 0.168 & 0.249 $\pm$ 0.053 & 0.992 $\pm$ 0.115 & 0.212 $\pm$ 0.020 \\
    & ~~+\textbf{LoRec} & \textbf{0.229 $\pm$ 0.010} & \textbf{0.061 $\pm$ 0.001} & \textbf{0.144 $\pm$ 0.010} & \textbf{0.044 $\pm$ 0.002} \\
    \cmidrule{2-6}
    & \textbf{FMLPrec} & 0.228 $\pm$ 0.122 & 0.072 $\pm$ 0.014 & 0.143 $\pm$ 0.002 & 0.036 $\pm$ 0.000 \\
    & ~~+\textbf{LLM4Dec} & 0.138 $\pm$ 0.011 & 0.029 $\pm$ 0.000 & 0.104 $\pm$ 0.013 & 0.027 $\pm$ 0.001 \\
    & ~~+\textbf{LoRec} & \textbf{0.038 $\pm$ 0.001} & \textbf{0.009 $\pm$ 0.000} & \textbf{0.071 $\pm$ 0.011} & \textbf{0.010 $\pm$ 0.001} \\
    \midrule
    \multirow{6}{1.2cm}{\centering \textbf{Arts}} & \textbf{GRU4rec} & 7.008 $\pm$ 5.912 & 2.445 $\pm$ 0.982 & 3.573 $\pm$ 1.769 & 1.149 $\pm$ 0.283 \\
    & ~~+\textbf{LLM4Dec} & 2.135 $\pm$ 0.954 & 0.765 $\pm$ 0.262 & 1.754 $\pm$ 0.420 & 0.493 $\pm$ 0.045 \\
    & ~~+\textbf{LoRec} & \textbf{0.358 $\pm$ 0.057} & \textbf{0.091 $\pm$ 0.004} & \textbf{0.601 $\pm$ 0.204} & \textbf{0.177 $\pm$ 0.020} \\
    \cmidrule{2-6}
    & \textbf{FMLPrec} & 0.675 $\pm$ 0.089 & 0.179 $\pm$ 0.006 & 1.471 $\pm$ 0.474 & 0.480 $\pm$ 0.071 \\
    & ~~+\textbf{LLM4Dec} & 0.108 $\pm$ 0.000 & 0.027 $\pm$ 0.000 & 0.436 $\pm$ 0.053 & 0.130 $\pm$ 0.005 \\
    & ~~+\textbf{LoRec} & \textbf{0.096 $\pm$ 0.003} & \textbf{0.024 $\pm$ 0.000} & \textbf{0.328 $\pm$ 0.016} & \textbf{0.098 $\pm$ 0.001} \\
    \bottomrule

\end{tabular}
   
        \end{threeparttable}
    }
\label{tab:rec_others}%
\end{table}

\section{CONCLUSION}

In this work, we introduce \textbf{LoRec}, a novel framework that employs LLM-enhanced Calibration to enhance the robustness of sequential recommender systems against poisoning attacks. LoRec integrates an LLM-enhanced Calibrator, integrating extensive open-world knowledge from LLMs into specific knowledge from the current recommender system to estimate the likelihood of users being fraudsters. These probabilities are then applied to calibrate the weight of each user during the training phase of the sequential recommender system. Through a process of iterative refinement, LoRec effectively diminishes the impact of fraudsters. Our extensive experimental analysis shows that LoRec, as a general framework, enhances the robustness of sequential recommender systems against poisoning attacks while also preserving recommendation performance. 

This study primarily focuses on the scenario of sequential recommendations. When expanding the proposed LoRec to other recommendation scenarios like collaborative filtering, additional challenges arise from accurately modeling user feedback while avoiding the impact of supervised fraudster data on the recommender system. We look forward to further research and future work to enhance the applicability of the proposed framework to a wider range of recommendation scenarios, thereby advancing the field.

\section{Acknowledgments}
This work is funded by the National Key R\&D Program of China (2022YFB3103700, 2022YFB3103701), the Strategic Priority Research Program of the Chinese Academy of Sciences under Grant No. XDB0680101, and the National Natural Science Foundation of China under Grant Nos. 62102402, 62272125, U21B2046. Huawei Shen is also supported by Beijing Academy of Artificial Intelligence (BAAI).

\bibliographystyle{ACM-Reference-Format}
\bibliography{ref}

%%% -*-BibTeX-*-
%%% Do NOT edit. File created by BibTeX with style
%%% ACM-Reference-Format-Journals [18-Jan-2012].

\begin{thebibliography}{43}

%%% ====================================================================
%%% NOTE TO THE USER: you can override these defaults by providing
%%% customized versions of any of these macros before the \bibliography
%%% command.  Each of them MUST provide its own final punctuation,
%%% except for \shownote{}, \showDOI{}, and \showURL{}.  The latter two
%%% do not use final punctuation, in order to avoid confusing it with
%%% the Web address.
%%%
%%% To suppress output of a particular field, define its macro to expand
%%% to an empty string, or better, \unskip, like this:
%%%
%%% \newcommand{\showDOI}[1]{\unskip}   % LaTeX syntax
%%%
%%% \def \showDOI #1{\unskip}           % plain TeX syntax
%%%
%%% ====================================================================

\ifx \showCODEN    \undefined \def \showCODEN     #1{\unskip}     \fi
\ifx \showDOI      \undefined \def \showDOI       #1{#1}\fi
\ifx \showISBNx    \undefined \def \showISBNx     #1{\unskip}     \fi
\ifx \showISBNxiii \undefined \def \showISBNxiii  #1{\unskip}     \fi
\ifx \showISSN     \undefined \def \showISSN      #1{\unskip}     \fi
\ifx \showLCCN     \undefined \def \showLCCN      #1{\unskip}     \fi
\ifx \shownote     \undefined \def \shownote      #1{#1}          \fi
\ifx \showarticletitle \undefined \def \showarticletitle #1{#1}   \fi
\ifx \showURL      \undefined \def \showURL       {\relax}        \fi
% The following commands are used for tagged output and should be
% invisible to TeX
\providecommand\bibfield[2]{#2}
\providecommand\bibinfo[2]{#2}
\providecommand\natexlab[1]{#1}
\providecommand\showeprint[2][]{arXiv:#2}

\bibitem[Beutel et~al\mbox{.}(2018)]%
        {beutel2018latent}
\bibfield{author}{\bibinfo{person}{Alex Beutel}, \bibinfo{person}{Paul Covington}, \bibinfo{person}{Sagar Jain}, \bibinfo{person}{Can Xu}, \bibinfo{person}{Jia Li}, \bibinfo{person}{Vince Gatto}, {and} \bibinfo{person}{Ed~H Chi}.} \bibinfo{year}{2018}\natexlab{}.
\newblock \showarticletitle{Latent Cross: Making Use of Context in Recurrent Recommender Systems}. In \bibinfo{booktitle}{\emph{Proceedings of the 11th ACM International Conference on Web Search and Data Mining}}. \bibinfo{pages}{46--54}.
\newblock


\bibitem[Chen et~al\mbox{.}(2022)]%
        {chen2022intent}
\bibfield{author}{\bibinfo{person}{Yongjun Chen}, \bibinfo{person}{Zhiwei Liu}, \bibinfo{person}{Jia Li}, \bibinfo{person}{Julian McAuley}, {and} \bibinfo{person}{Caiming Xiong}.} \bibinfo{year}{2022}\natexlab{}.
\newblock \showarticletitle{Intent {Contrastive} {Learning} for {Sequential} {Recommendation}}. In \bibinfo{booktitle}{\emph{Proceedings of the {ACM} {Web} {Conference} 2022}}. \bibinfo{pages}{2172--2182}.
\newblock


\bibitem[Cho et~al\mbox{.}(2014)]%
        {cho2014learning}
\bibfield{author}{\bibinfo{person}{Kyunghyun Cho}, \bibinfo{person}{Bart Van~Merri{\"e}nboer}, \bibinfo{person}{Caglar Gulcehre}, \bibinfo{person}{Dzmitry Bahdanau}, \bibinfo{person}{Fethi Bougares}, \bibinfo{person}{Holger Schwenk}, {and} \bibinfo{person}{Yoshua Bengio}.} \bibinfo{year}{2014}\natexlab{}.
\newblock \showarticletitle{Learning Phrase Representations Using RNN Encoder-decoder for Statistical Machine Translation}.
\newblock \bibinfo{journal}{\emph{arXiv preprint arXiv:1406.1078}} (\bibinfo{year}{2014}).
\newblock


\bibitem[Chung et~al\mbox{.}(2013)]%
        {chung2013beta}
\bibfield{author}{\bibinfo{person}{Chen-Yao Chung}, \bibinfo{person}{Ping-Yu Hsu}, {and} \bibinfo{person}{Shih-Hsiang Huang}.} \bibinfo{year}{2013}\natexlab{}.
\newblock \showarticletitle{$\beta P$: A Novel Approach to Filter Out Malicious Rating Rrofiles from Recommender Systems}.
\newblock \bibinfo{journal}{\emph{Decision Support Systems}} \bibinfo{volume}{55}, \bibinfo{number}{1} (\bibinfo{year}{2013}), \bibinfo{pages}{314--325}.
\newblock


\bibitem[Deldjoo et~al\mbox{.}(2020)]%
        {deldjoo2020recommender}
\bibfield{author}{\bibinfo{person}{Yashar Deldjoo}, \bibinfo{person}{Markus Schedl}, \bibinfo{person}{Paolo Cremonesi}, {and} \bibinfo{person}{Gabriella Pasi}.} \bibinfo{year}{2020}\natexlab{}.
\newblock \showarticletitle{Recommender Systems Leveraging Multimedia Content}.
\newblock \bibinfo{journal}{\emph{Comput. Surveys}} \bibinfo{volume}{53}, \bibinfo{number}{5} (\bibinfo{year}{2020}), \bibinfo{pages}{1--38}.
\newblock


\bibitem[Fan et~al\mbox{.}(2023)]%
        {fan2023recommender}
\bibfield{author}{\bibinfo{person}{Wenqi Fan}, \bibinfo{person}{Zihuai Zhao}, \bibinfo{person}{Jiatong Li}, \bibinfo{person}{Yunqing Liu}, \bibinfo{person}{Xiaowei Mei}, \bibinfo{person}{Yiqi Wang}, \bibinfo{person}{Jiliang Tang}, {and} \bibinfo{person}{Qing Li}.} \bibinfo{year}{2023}\natexlab{}.
\newblock \showarticletitle{Recommender Systems in the Era of Large Language Models (llms)}.
\newblock \bibinfo{journal}{\emph{arXiv preprint arXiv:2307.02046}} (\bibinfo{year}{2023}).
\newblock


\bibitem[He et~al\mbox{.}(2016)]%
        {he2016vista}
\bibfield{author}{\bibinfo{person}{Ruining He}, \bibinfo{person}{Chen Fang}, \bibinfo{person}{Zhaowen Wang}, {and} \bibinfo{person}{Julian McAuley}.} \bibinfo{year}{2016}\natexlab{}.
\newblock \showarticletitle{Vista: A Visually, Socially, and Temporally-aware Model for Artistic Recommendation}. In \bibinfo{booktitle}{\emph{Proceedings of the 10th ACM Conference on Recommender Systems}}. \bibinfo{pages}{309--316}.
\newblock


\bibitem[He and McAuley(2016)]%
        {he2016fusing}
\bibfield{author}{\bibinfo{person}{Ruining He} {and} \bibinfo{person}{Julian McAuley}.} \bibinfo{year}{2016}\natexlab{}.
\newblock \showarticletitle{Fusing Similarity Models with Markov Chains for Sparse Sequential Recommendation}. In \bibinfo{booktitle}{\emph{Proceedings of the 16th International Conference on Data Mining}}. \bibinfo{pages}{191--200}.
\newblock


\bibitem[He et~al\mbox{.}(2018)]%
        {he2018adversarial}
\bibfield{author}{\bibinfo{person}{Xiangnan He}, \bibinfo{person}{Zhankui He}, \bibinfo{person}{Xiaoyu Du}, {and} \bibinfo{person}{Tat-Seng Chua}.} \bibinfo{year}{2018}\natexlab{}.
\newblock \showarticletitle{Adversarial {Personalized} {Ranking} for {Recommendation}}. In \bibinfo{booktitle}{\emph{The 41st {International} {ACM} {SIGIR} {Conference} on {Research} and {Development} in {Information} {Retrieval}}}. \bibinfo{pages}{355--364}.
\newblock


\bibitem[Hidasi et~al\mbox{.}(2015)]%
        {hidasi2015session}
\bibfield{author}{\bibinfo{person}{Bal{\'a}zs Hidasi}, \bibinfo{person}{Alexandros Karatzoglou}, \bibinfo{person}{Linas Baltrunas}, {and} \bibinfo{person}{Domonkos Tikk}.} \bibinfo{year}{2015}\natexlab{}.
\newblock \showarticletitle{Session-based Recommendations with Recurrent Neural Networks}.
\newblock \bibinfo{journal}{\emph{arXiv preprint arXiv:1511.06939}} (\bibinfo{year}{2015}).
\newblock


\bibitem[Hou et~al\mbox{.}(2023)]%
        {hou2023learning}
\bibfield{author}{\bibinfo{person}{Yupeng Hou}, \bibinfo{person}{Zhankui He}, \bibinfo{person}{Julian McAuley}, {and} \bibinfo{person}{Wayne~Xin Zhao}.} \bibinfo{year}{2023}\natexlab{}.
\newblock \showarticletitle{Learning {Vector}-{Quantized} {Item} {Representation} for {Transferable} {Sequential} {Recommenders}}. In \bibinfo{booktitle}{\emph{Proceedings of the {ACM} {Web} {Conference} 2023}}. \bibinfo{pages}{1162--1171}.
\newblock


\bibitem[Huang et~al\mbox{.}(2021)]%
        {huang2021data}
\bibfield{author}{\bibinfo{person}{Hai Huang}, \bibinfo{person}{Jiaming Mu}, \bibinfo{person}{Neil~Zhenqiang Gong}, \bibinfo{person}{Qi Li}, \bibinfo{person}{Bin Liu}, {and} \bibinfo{person}{Mingwei Xu}.} \bibinfo{year}{2021}\natexlab{}.
\newblock \showarticletitle{Data {Poisoning} {Attacks} to {Deep} {Learning} {Based} {Recommender} {Systems}}. In \bibinfo{booktitle}{\emph{Proceedings 2021 {Network} and {Distributed} {System} {Security} {Symposium}}}.
\newblock


\bibitem[Jiao et~al\mbox{.}(2023)]%
        {jiao2023chatgpt}
\bibfield{author}{\bibinfo{person}{Wenxiang Jiao}, \bibinfo{person}{Wenxuan Wang}, \bibinfo{person}{Jen-tse Huang}, \bibinfo{person}{Xing Wang}, {and} \bibinfo{person}{Zhaopeng Tu}.} \bibinfo{year}{2023}\natexlab{}.
\newblock \showarticletitle{Is ChatGPT a Good Translator? A Preliminary Study}.
\newblock \bibinfo{journal}{\emph{arXiv preprint arXiv:2301.08745}} (\bibinfo{year}{2023}).
\newblock


\bibitem[Kang and McAuley(2018)]%
        {kang2018selfattentive}
\bibfield{author}{\bibinfo{person}{Wang-Cheng Kang} {and} \bibinfo{person}{Julian~J. McAuley}.} \bibinfo{year}{2018}\natexlab{}.
\newblock \showarticletitle{Self-{Attentive} {Sequential} {Recommendation}}. In \bibinfo{booktitle}{\emph{Proceedings of the 18th {International} {Conference} on {Data} {Mining}}}. \bibinfo{pages}{197--206}.
\newblock


\bibitem[Lam and Riedl(2004)]%
        {lam2004shilling}
\bibfield{author}{\bibinfo{person}{Shyong~K Lam} {and} \bibinfo{person}{John Riedl}.} \bibinfo{year}{2004}\natexlab{}.
\newblock \showarticletitle{Shilling Recommender Systems for Fun and Profit}. In \bibinfo{booktitle}{\emph{Proceedings of the 13th International Conference on World Wide Web}}. \bibinfo{pages}{393--402}.
\newblock


\bibitem[Li et~al\mbox{.}(2023b)]%
        {li2023empowering}
\bibfield{author}{\bibinfo{person}{Jiatong Li}, \bibinfo{person}{Yunqing Liu}, \bibinfo{person}{Wenqi Fan}, \bibinfo{person}{Xiao-Yong Wei}, \bibinfo{person}{Hui Liu}, \bibinfo{person}{Jiliang Tang}, {and} \bibinfo{person}{Qing Li}.} \bibinfo{year}{2023}\natexlab{b}.
\newblock \showarticletitle{Empowering Molecule Discovery for Molecule-Caption Translation with Large Language Models: A ChatGPT Perspective}.
\newblock \bibinfo{journal}{\emph{arXiv preprint arXiv:2306.06615}} (\bibinfo{year}{2023}).
\newblock


\bibitem[Li et~al\mbox{.}(2023a)]%
        {li2023exploring}
\bibfield{author}{\bibinfo{person}{Ruyu Li}, \bibinfo{person}{Wenhao Deng}, \bibinfo{person}{Yu-Jie Cheng}, \bibinfo{person}{Zheng Yuan}, \bibinfo{person}{Jiaqi Zhang}, {and} \bibinfo{person}{Fajie Yuan}.} \bibinfo{year}{2023}\natexlab{a}.
\newblock \showarticletitle{Exploring the {Upper} {Limits} of {Text}-{Based} {Collaborative} {Filtering} {Using} {Large} {Language} {Models}: Discoveries and {Insights}}.
\newblock \bibinfo{journal}{\emph{arXiv preprint arXiv:2305.11700}}  \bibinfo{volume}{abs/2305.11700} (\bibinfo{year}{2023}).
\newblock


\bibitem[Li et~al\mbox{.}(2020)]%
        {li2020adversarial}
\bibfield{author}{\bibinfo{person}{Ruirui Li}, \bibinfo{person}{Xian Wu}, {and} \bibinfo{person}{Wei Wang}.} \bibinfo{year}{2020}\natexlab{}.
\newblock \showarticletitle{Adversarial Learning to Compare: Self-attentive Prospective Customer Recommendation in Location Based Social Networks}. In \bibinfo{booktitle}{\emph{Proceedings of the 13th International Conference on Web Search and Data Mining}}. \bibinfo{pages}{349--357}.
\newblock


\bibitem[Liu(2020)]%
        {liu2020recommending}
\bibfield{author}{\bibinfo{person}{Yuli Liu}.} \bibinfo{year}{2020}\natexlab{}.
\newblock \showarticletitle{Recommending {Inferior} {Results}: A {General} and {Feature}-{Free} {Model} for {Spam} {Detection}}. In \bibinfo{booktitle}{\emph{Proceedings of the 29th {ACM} {International} {Conference} on {Information} and {Knowledge} {Management}}}. \bibinfo{pages}{955--974}.
\newblock


\bibitem[Mobasher et~al\mbox{.}(2007)]%
        {mobasher2007toward}
\bibfield{author}{\bibinfo{person}{Bamshad Mobasher}, \bibinfo{person}{Robin Burke}, \bibinfo{person}{Runa Bhaumik}, {and} \bibinfo{person}{Chad Williams}.} \bibinfo{year}{2007}\natexlab{}.
\newblock \showarticletitle{Toward Trustworthy Recommender Systems: An Analysis of Attack Models and Algorithm Robustness}.
\newblock \bibinfo{journal}{\emph{ACM Transactions on Internet Technology}} \bibinfo{volume}{7}, \bibinfo{number}{4} (\bibinfo{year}{2007}), \bibinfo{pages}{23--es}.
\newblock


\bibitem[Ni et~al\mbox{.}(2019)]%
        {ni2019justifying}
\bibfield{author}{\bibinfo{person}{Jianmo Ni}, \bibinfo{person}{Jiacheng Li}, {and} \bibinfo{person}{Julian McAuley}.} \bibinfo{year}{2019}\natexlab{}.
\newblock \showarticletitle{Justifying Recommendations Using Distantly-labeled Reviews and Fine-grained Aspects}. In \bibinfo{booktitle}{\emph{Proceedings of the 2019 Conference on Empirical Methods in Natural Language Processing and the 9th International Joint Conference on Natural Language Processing}}. \bibinfo{pages}{188--197}.
\newblock


\bibitem[Niu et~al\mbox{.}(2018)]%
        {niu2018neural}
\bibfield{author}{\bibinfo{person}{Wei Niu}, \bibinfo{person}{James Caverlee}, {and} \bibinfo{person}{Haokai Lu}.} \bibinfo{year}{2018}\natexlab{}.
\newblock \showarticletitle{Neural Personalized Ranking for Image Recommendation}. In \bibinfo{booktitle}{\emph{Proceedings of the 11th ACM International Conference on Web Search and Data Mining}}. \bibinfo{pages}{423--431}.
\newblock


\bibitem[Rendle et~al\mbox{.}(2012)]%
        {rendle2012bpr}
\bibfield{author}{\bibinfo{person}{Steffen Rendle}, \bibinfo{person}{Christoph Freudenthaler}, \bibinfo{person}{Zeno Gantner}, {and} \bibinfo{person}{Lars Schmidt-Thieme}.} \bibinfo{year}{2012}\natexlab{}.
\newblock \showarticletitle{BPR: Bayesian Personalized Ranking From Implicit Feedback}.
\newblock \bibinfo{journal}{\emph{arXiv preprint arXiv:1205.2618}} (\bibinfo{year}{2012}).
\newblock


\bibitem[Sun et~al\mbox{.}(2019)]%
        {sun2019bert4rec}
\bibfield{author}{\bibinfo{person}{Fei Sun}, \bibinfo{person}{Jun Liu}, \bibinfo{person}{Jian Wu}, \bibinfo{person}{Changhua Pei}, \bibinfo{person}{Xiao Lin}, \bibinfo{person}{Wenwu Ou}, {and} \bibinfo{person}{Peng Jiang}.} \bibinfo{year}{2019}\natexlab{}.
\newblock \showarticletitle{BERT4Rec: Sequential Recommendation with Bidirectional Encoder Representations from Transformer}. In \bibinfo{booktitle}{\emph{Proceedings of the 28th ACM International Conference on Information and Knowledge Management}}. \bibinfo{pages}{1441--1450}.
\newblock


\bibitem[Tang and Wang(2018)]%
        {tang2018personalized}
\bibfield{author}{\bibinfo{person}{Jiaxi Tang} {and} \bibinfo{person}{Ke Wang}.} \bibinfo{year}{2018}\natexlab{}.
\newblock \showarticletitle{Personalized Top-n Sequential Recommendation via Convolutional Sequence Embedding}. In \bibinfo{booktitle}{\emph{Proceedings of the 11th ACM International Conference on Web Search and Data Mining}}. \bibinfo{pages}{565--573}.
\newblock


\bibitem[Tang et~al\mbox{.}(2020)]%
        {tang2020revisiting}
\bibfield{author}{\bibinfo{person}{Jiaxi Tang}, \bibinfo{person}{Hongyi Wen}, {and} \bibinfo{person}{Ke Wang}.} \bibinfo{year}{2020}\natexlab{}.
\newblock \showarticletitle{Revisiting adversarially learned injection attacks against recommender systems}. In \bibinfo{booktitle}{\emph{Proceedings of the 14th ACM Conference on Recommender Systems}}. \bibinfo{pages}{318--327}.
\newblock


\bibitem[Tian et~al\mbox{.}(2022)]%
        {tian2022learning}
\bibfield{author}{\bibinfo{person}{Changxin Tian}, \bibinfo{person}{Yuexiang Xie}, \bibinfo{person}{Yaliang Li}, \bibinfo{person}{Nan Yang}, {and} \bibinfo{person}{Wayne~Xin Zhao}.} \bibinfo{year}{2022}\natexlab{}.
\newblock \showarticletitle{Learning to {Denoise} {Unreliable} {Interactions} for {Graph} {Collaborative} {Filtering}}. In \bibinfo{booktitle}{\emph{Proceedings of the 45th {International} {ACM} {SIGIR} {Conference} on {Research} and {Development} in {Information} {Retrieval}}}. \bibinfo{pages}{122--132}.
\newblock


\bibitem[Vaswani et~al\mbox{.}(2017)]%
        {vaswani2017attention}
\bibfield{author}{\bibinfo{person}{Ashish Vaswani}, \bibinfo{person}{Noam Shazeer}, \bibinfo{person}{Niki Parmar}, \bibinfo{person}{Jakob Uszkoreit}, \bibinfo{person}{Llion Jones}, \bibinfo{person}{Aidan~N Gomez}, \bibinfo{person}{{\L}ukasz Kaiser}, {and} \bibinfo{person}{Illia Polosukhin}.} \bibinfo{year}{2017}\natexlab{}.
\newblock \showarticletitle{Attention Is All You Need}.
\newblock \bibinfo{journal}{\emph{Advances in Neural Information Processing Systems}}  \bibinfo{volume}{30} (\bibinfo{year}{2017}).
\newblock


\bibitem[von Hippel(2005)]%
        {doi:10.1080/10691898.2005.11910556}
\bibfield{author}{\bibinfo{person}{Paul~T. von Hippel}.} \bibinfo{year}{2005}\natexlab{}.
\newblock \showarticletitle{Mean, Median, and Skew: Correcting a Textbook Rule}.
\newblock \bibinfo{journal}{\emph{Journal of Statistics Education}} \bibinfo{volume}{13}, \bibinfo{number}{2} (\bibinfo{year}{2005}).
\newblock


\bibitem[Wei et~al\mbox{.}(2022)]%
        {pmlr-v162-wei22d}
\bibfield{author}{\bibinfo{person}{Hongxin Wei}, \bibinfo{person}{Renchunzi Xie}, \bibinfo{person}{Hao Cheng}, \bibinfo{person}{Lei Feng}, \bibinfo{person}{Bo An}, {and} \bibinfo{person}{Yixuan Li}.} \bibinfo{year}{2022}\natexlab{}.
\newblock \showarticletitle{Mitigating Neural Network Overconfidence with Logit Normalization}. In \bibinfo{booktitle}{\emph{Proceedings of the 39th International Conference on Machine Learning}} \emph{(\bibinfo{series}{Proceedings of Machine Learning Research}, Vol.~\bibinfo{volume}{162})}, \bibfield{editor}{\bibinfo{person}{Kamalika Chaudhuri}, \bibinfo{person}{Stefanie Jegelka}, \bibinfo{person}{Le~Song}, \bibinfo{person}{Csaba Szepesvari}, \bibinfo{person}{Gang Niu}, {and} \bibinfo{person}{Sivan Sabato}} (Eds.). \bibinfo{publisher}{PMLR}, \bibinfo{pages}{23631--23644}.
\newblock
\urldef\tempurl%
\url{https://proceedings.mlr.press/v162/wei22d.html}
\showURL{%
\tempurl}


\bibitem[Wu et~al\mbox{.}(2021)]%
        {wu2021fight}
\bibfield{author}{\bibinfo{person}{Chenwang Wu}, \bibinfo{person}{Defu Lian}, \bibinfo{person}{Yong Ge}, \bibinfo{person}{Zhihao Zhu}, \bibinfo{person}{Enhong Chen}, {and} \bibinfo{person}{Senchao Yuan}.} \bibinfo{year}{2021}\natexlab{}.
\newblock \showarticletitle{Fight {Fire} with {Fire}: Towards {Robust} {Recommender} {Systems} via {Adversarial} {Poisoning} {Training}}. In \bibinfo{booktitle}{\emph{Proceedings of the 44th {International} {ACM} {SIGIR} {Conference} on {Research} and {Development} in {Information} {Retrieval}}}. \bibinfo{pages}{1074--1083}.
\newblock


\bibitem[Wu et~al\mbox{.}(2020)]%
        {wu2020mind}
\bibfield{author}{\bibinfo{person}{Fangzhao Wu}, \bibinfo{person}{Ying Qiao}, \bibinfo{person}{Jiun-Hung Chen}, \bibinfo{person}{Chuhan Wu}, \bibinfo{person}{Tao Qi}, \bibinfo{person}{Jianxun Lian}, \bibinfo{person}{Danyang Liu}, \bibinfo{person}{Xing Xie}, \bibinfo{person}{Jianfeng Gao}, \bibinfo{person}{Winnie Wu}, {et~al\mbox{.}}} \bibinfo{year}{2020}\natexlab{}.
\newblock \showarticletitle{Mind: A Large-scale Dataset for News Recommendation}. In \bibinfo{booktitle}{\emph{Proceedings of the 58th Annual Meeting of the Association for Computational Linguistics}}. \bibinfo{pages}{3597--3606}.
\newblock


\bibitem[Xie et~al\mbox{.}(2022)]%
        {xie2022contrastive}
\bibfield{author}{\bibinfo{person}{Xu Xie}, \bibinfo{person}{Fei Sun}, \bibinfo{person}{Zhaoyang Liu}, \bibinfo{person}{Shiwen Wu}, \bibinfo{person}{Jinyang Gao}, \bibinfo{person}{Jiandong Zhang}, \bibinfo{person}{Bolin Ding}, {and} \bibinfo{person}{Bin Cui}.} \bibinfo{year}{2022}\natexlab{}.
\newblock \showarticletitle{Contrastive Learning for Sequential Recommendation}. In \bibinfo{booktitle}{\emph{Proceedings of the 38th International Conference on Data Engineering}}. \bibinfo{pages}{1259--1273}.
\newblock


\bibitem[Yang et~al\mbox{.}(2016)]%
        {yang2016rescale}
\bibfield{author}{\bibinfo{person}{Zhihai Yang}, \bibinfo{person}{Lin Xu}, \bibinfo{person}{Zhongmin Cai}, {and} \bibinfo{person}{Zongben Xu}.} \bibinfo{year}{2016}\natexlab{}.
\newblock \showarticletitle{Re-scale {AdaBoost} for Attack Detection in Collaborative Filtering Recommender Systems}.
\newblock \bibinfo{journal}{\emph{Knowledge-Based Systems}}  \bibinfo{volume}{100} (\bibinfo{year}{2016}), \bibinfo{pages}{74--88}.
\newblock


\bibitem[Ye et~al\mbox{.}(2023)]%
        {ye2023towards}
\bibfield{author}{\bibinfo{person}{Haibo Ye}, \bibinfo{person}{Xinjie Li}, \bibinfo{person}{Yuan Yao}, {and} \bibinfo{person}{Hanghang Tong}.} \bibinfo{year}{2023}\natexlab{}.
\newblock \showarticletitle{Towards Robust Neural Graph Collaborative Filtering via Structure Denoising and Embedding Perturbation}.
\newblock \bibinfo{journal}{\emph{ACM Transactions on Information Systems}} \bibinfo{volume}{41}, \bibinfo{number}{3} (\bibinfo{year}{2023}), \bibinfo{pages}{1--28}.
\newblock


\bibitem[Yuan et~al\mbox{.}(2023)]%
        {yuan2023where}
\bibfield{author}{\bibinfo{person}{Zheng Yuan}, \bibinfo{person}{Fajie Yuan}, \bibinfo{person}{Yu Song}, \bibinfo{person}{Youhua Li}, \bibinfo{person}{Junchen Fu}, \bibinfo{person}{Fei Yang}, \bibinfo{person}{Yunzhu Pan}, {and} \bibinfo{person}{Yongxin Ni}.} \bibinfo{year}{2023}\natexlab{}.
\newblock \showarticletitle{Where to Go Next for Recommender Systems? ID- vs. Modality-based Recommender Models Revisited}. In \bibinfo{booktitle}{\emph{Proceedings of the 46th International ACM SIGIR Conference on Research and Development in Information Retrieval}}. \bibinfo{pages}{2639–2649}.
\newblock


\bibitem[Yue et~al\mbox{.}(2022)]%
        {yue2022defending}
\bibfield{author}{\bibinfo{person}{Zhenrui Yue}, \bibinfo{person}{Huimin Zeng}, \bibinfo{person}{Ziyi Kou}, \bibinfo{person}{Lanyu Shang}, {and} \bibinfo{person}{Dong Wang}.} \bibinfo{year}{2022}\natexlab{}.
\newblock \showarticletitle{Defending {Substitution}-{Based} {Profile} {Pollution} {Attacks} on {Sequential} {Recommenders}}. In \bibinfo{booktitle}{\emph{Proceedings of the 16th ACM Conference on Recommender Systems}}. \bibinfo{pages}{59--70}.
\newblock


\bibitem[Zhang and Zhou(2014)]%
        {zhang2014hhtsvm}
\bibfield{author}{\bibinfo{person}{Fuzhi Zhang} {and} \bibinfo{person}{Quanqiang Zhou}.} \bibinfo{year}{2014}\natexlab{}.
\newblock \showarticletitle{HHT--{SVM}: An Online Method for Detecting Profile Injection Attacks in Collaborative Recommender Systems}.
\newblock \bibinfo{journal}{\emph{Knowledge-Based Systems}}  \bibinfo{volume}{65} (\bibinfo{year}{2014}), \bibinfo{pages}{96--105}.
\newblock


\bibitem[Zhang et~al\mbox{.}(2023a)]%
        {zhang2023robust}
\bibfield{author}{\bibinfo{person}{Kaike Zhang}, \bibinfo{person}{Qi Cao}, \bibinfo{person}{Fei Sun}, \bibinfo{person}{Yunfan Wu}, \bibinfo{person}{Shuchang Tao}, \bibinfo{person}{Huawei Shen}, {and} \bibinfo{person}{Xueqi Cheng}.} \bibinfo{year}{2023}\natexlab{a}.
\newblock \showarticletitle{Robust Recommender System: A Survey and Future Directions}.
\newblock \bibinfo{journal}{\emph{arXiv preprint arXiv:2309.02057}} (\bibinfo{year}{2023}).
\newblock


\bibitem[Zhang et~al\mbox{.}(2020)]%
        {zhang2020gcnbased}
\bibfield{author}{\bibinfo{person}{Shijie Zhang}, \bibinfo{person}{Hongzhi Yin}, \bibinfo{person}{Tong Chen}, \bibinfo{person}{Quoc Viet~Nguyen Hung}, \bibinfo{person}{Zi Huang}, {and} \bibinfo{person}{Lizhen Cui}.} \bibinfo{year}{2020}\natexlab{}.
\newblock \showarticletitle{GCN-{Based} {User} {Representation} {Learning} for {Unifying} {Robust} {Recommendation} and {Fraudster} {Detection}}. In \bibinfo{booktitle}{\emph{Proceedings of the 43rd {International} {ACM} {SIGIR} {Conference} on {Research} and {Development} in {Information} {Retrieval}}}. \bibinfo{pages}{689--698}.
\newblock


\bibitem[Zhang et~al\mbox{.}(2023b)]%
        {zhang2023mlpst}
\bibfield{author}{\bibinfo{person}{Zijian Zhang}, \bibinfo{person}{Ze Huang}, \bibinfo{person}{Zhiwei Hu}, \bibinfo{person}{Xiangyu Zhao}, \bibinfo{person}{Wanyu Wang}, \bibinfo{person}{Zitao Liu}, \bibinfo{person}{Junbo Zhang}, \bibinfo{person}{S~Joe Qin}, {and} \bibinfo{person}{Hongwei Zhao}.} \bibinfo{year}{2023}\natexlab{b}.
\newblock \showarticletitle{MLPST: MLP is All You Need for Spatio-Temporal Prediction}. In \bibinfo{booktitle}{\emph{Proceedings of the 32nd ACM International Conference on Information and Knowledge Management}}. \bibinfo{pages}{3381--3390}.
\newblock


\bibitem[Zhou et~al\mbox{.}(2020)]%
        {zhou2020srec}
\bibfield{author}{\bibinfo{person}{Kun Zhou}, \bibinfo{person}{Hui Wang}, \bibinfo{person}{Wayne~Xin Zhao}, \bibinfo{person}{Yutao Zhu}, \bibinfo{person}{Sirui Wang}, \bibinfo{person}{Fuzheng Zhang}, \bibinfo{person}{Zhongyuan Wang}, {and} \bibinfo{person}{Ji-Rong Wen}.} \bibinfo{year}{2020}\natexlab{}.
\newblock \showarticletitle{S3-{Rec}: Self-{Supervised} {Learning} for {Sequential} {Recommendation} with {Mutual} {Information} {Maximization}}. In \bibinfo{booktitle}{\emph{Proceedings of the 29th {ACM} {International} {Conference} on {Information} and {Knowledge} {Management}}}. \bibinfo{pages}{1893--1902}.
\newblock


\bibitem[Zhou et~al\mbox{.}(2022)]%
        {zhou2022filterenhanced}
\bibfield{author}{\bibinfo{person}{Kun Zhou}, \bibinfo{person}{Hui Yu}, \bibinfo{person}{Wayne~Xin Zhao}, {and} \bibinfo{person}{Ji-Rong Wen}.} \bibinfo{year}{2022}\natexlab{}.
\newblock \showarticletitle{Filter-enhanced {MLP} is {All} {You} {Need} for {Sequential} {Recommendation}}. In \bibinfo{booktitle}{\emph{Proceedings of the {ACM} {Web} {Conference} 2022}}. \bibinfo{pages}{2388--2399}.
\newblock


\end{thebibliography}

\appendix

\end{document}